\newtheorem{prop}{Proposition}
\newtheorem{prb}{Problem}
\begin{document}
\title{ {\huge Cache Placement Optimization in Mobile Edge Computing Networks with Unaware Environment - An Extended Multi-armed Bandit Approach}}

\author{{Yuqi Han, Rui Wang, Jun Wu, Dian Liu, and Haoqi Ren}
\thanks{Y. Han, R. Wang, D. Liu, and H. Ren are with the College of Electrical and information Engineering,
Tongji University, Shanghai, 201804 China, J. Wu, is with Scool of Computer Science, Fudan University, Shanghai, 201203, China, Emails: yqhan@tongji.edu.cn, ruiwang@tongji.edu.cn, wujun@tongji.edu.cn, liudian622@163.com, renhaoqi@tongji.edu.cn.}
}

% make the title area
\maketitle

\begin{abstract}

Caching high-frequency reuse contents at the edge servers in the mobile edge computing (MEC) network omits the part of backhaul transmission and further releases the pressure of data traffic. However, how to efficiently decide the caching contents for edge servers is still an open problem, which refers to the cache capacity of edge servers, the popularity of each content, and the wireless channel quality during transmission.
In this paper, we discuss the influence of unknown user density and popularity of content on the cache placement solution at the edge server. Specifically, towards the implementation of the cache placement solution in the practical network, there are two problems needing to be solved.
First, the estimation of unknown users' preference needs a huge amount of records of users' previous requests.
Second, the overlapping serving regions among edge servers cause the wrong estimation of users' preference, which hinders the individual decision of caching placement. To address the first issue, we propose a learning-based solution to adaptively optimize the cache placement policy without any previous knowledge of the user density and the popularity of the contents. We develop the extended multi-armed bandit (Extended MAB), which combines the generalized global bandit (GGB) and Standard Multi-armed bandit (MAB), to iteratively estimate both a global parameter, i.e., the user density, and individual parameters, i.e., the popularity of each content. For the second problem, a multi-agent Extended MAB based solution is presented to avoid the mis-estimation of parameters and achieve the decentralized cache placement policy. The proposed solution determines the primary time slot and secondary time slot for each edge server.  The edge servers estimates expected satisfied user number of caching a content with the overlap information and determine the cache placement solution.
The proposed strategies are proven to achieve the bounded regret according to the mathematical analysis. Extensive simulations verify the optimality of the proposed strategies when comparing with baselines.

\end{abstract}
%}

\section{Introduction}

{

Emerging 5G networks provides the high-efficiency transmission under challenge of the increasing amount of data traffic and user number, which brings economic and technical benefits. Specifically, mobile edge computing (MEC) is introduced in the 5G network to shorten the back-haul transmission distance by deploying part of tasks at the edge server.
Caching popular data contents at the edge serves enables part of users' requests to be directly responded to by the edge servers rather than by the faraway data center, which reduces the delivery latency and greatly improves the efficiency of the wireless transmission.
However, because of the limitation of the cache capacity, the edge servers cannot cache all contents. In this paper, we aim to find the optimal cache placement policy at edge servers to maximize the number of satisfied user requests by caching contents. There are two challenges we address in the paper.

First, the cache placement solution depends on the popularity of each content, which is hard to acquire for edge servers.  Meanwhile, the user number is dynamic as time goes by and unknown. With the unknown user density, the number of requesting a content at a time slot cannot directly reflect the popularity of the content. Second, to cover all mobile devices in the wireless network, some of the edge servers are placed with the overlapped serving region. In this case, devices in the overlapped serving region could be served by multiple edge servers.  The optimal cache placement solution not only depends on the practical popularity of the content but also the cache placement of the adjacent edge servers that have the overlapping region with it.

Towards the two problems, we propose a learning-based cache placement solution without any prior knowledge of the user density and the content popularity. Moreover, the proposed cache placement solution could avoid inaccurate estimation of parameters and performance loss under the overlapped region when multiple edge servers have overlapping serving region.

For the first problem, we conduct the cache placement optimization with the multi-armed bandit (MAB) model under the unknown user density and popularity of each content. The standard MAB, which regards each arm individual, and Generalized global bandit (GGB), which assumes the arms have a known relationship with an unknown parameter, cannot handle the two different unknown parameters concurrently. To fill the gap between the practical requirement of diverse parameter estimation and constraints of the MAB models, we propose the Extended MAB for cache placement optimization, which considers both the individual parameter and the global parameter. The user density and popularity of each content are estimated concurrently, which improves the learning efficiency of the optimal cache placement solution.

For the second problem, we first propose a centralized cache placement policy, in which one of the dimensions represents all cache placement choices, and the other one covers all edge servers,  to eliminate the inaccuracy of  parameter estimation. Moreover, to reduce the complexity, we propose a multi-agent Extended-MAB for edge servers to individually make cache placement. Each edge server in the network is regarded as an agent and has its exclusive time slot to make parameter estimation. With the information of the overlapped region, the edge server estimates the expected serving user number and further derives the optimal cache placement solution.}

The main contributions of this paper are summarized as follows:
\begin{itemize}
\item We propose a cache placement solution under unknown user density and popularity of content. To the best of our knowledge, this is the first work that considers both of user density and popularity of content unknown in the cache placement solution. Specifically, we propose the Extended MAB, which is first used to deal with the cache placement in an \emph{individual edge server scenario} where edge servers own a non-overlapping serving region.  The regret is discussed to specify its theoretical performance.

\item To avoid the miscalculation due to the overlapped serving region in a large-scale MEC network, the proposed Extended-MAB is modified to deal with the joint caching placement optimization in a \emph{cooperative edge server scenario}. A dedicated edge server manager is first set to collect overall network information and output the cache placement action combination for all edge servers. Moreover, a multi-agent Extended-MAB in which the edge servers individually perform actions is proposed to release the burden of computation complexity overhead of the joint cache placement optimization.

\item  We conduct a series of experiments with different parameters and network settings. We choose $5$ different baselines and $3$ metrics to verify the availability of the proposed cache placement solution. According to the experiment, the proposed cache placement solution shows the best performance in all the scenarios.
\end{itemize}

\section{Related works}
In this section, we introduce existing research relevant to the cache placement policy and MAB. In what follows, we mainly review the research of the cache placement policy under the unknown environment of the MEC network. After that, we review some extension studies of the standard MAB theory and discuss the applications, which are relevant to the proposed extended MAB model in this paper. We also emphasize the difference between the mentioned research and proposed solution in this paper.

\subsection{Cache placement with unknown environment of MEC network}

The cache placement policy in the MEC network is influenced by the wireless channel condition, such as the signal to noise ratio (SNR), and the attributes of mobile users, such as the user density and popularity of requesting contents.

\cite{Zhou_2017, Zhang_2017} investigate the cache placement optimization with unknown channel conditions.  Since the noise and fading of wireless channel influence the transmission error rate, unknown characteristics of wireless channel conditions may decrease the successful transmission probability or waste the bandwidth.  In particular, \cite{Zhou_2017} models a cache-enabled network and assumes the channel state as a Markov decision process. The value function and state-action cost are discussed in different channel cases. \cite{Zhou_2017} discusses the optimal cache scheduling under the dynamic wireless network. \cite{Zhang_2017} aims to maximize the spectrum efficiency under the unknown channel condition and make the cache placement decision under the group-based structure. \cite{Zhang_2017} divides the entire wireless cell into several groups to avoid interference in the D2D network and proposes the primal-dual adaptive cache placement algorithm.

Towards the unknown popularity of contents, some pieces of researches consider predicting the popularity of contents using the machine learning method for the cache placement solution. For example, \cite{Yin_2018, Liu_2018} employ the deep learning-based prediction by firstly collecting users' requests as the training data.  In \cite{Nagaraja_2015, Bharath_2016}, transfer learning is introduced to estimate the popularity, which is further used to design the cache placement strategy. Specifically, \cite{Nagaraja_2015} studies the cache placement in a heterogeneous network where the content popularity information is unaware. \cite{Bharath_2016} discusses content correlation and information transition between periods and uses the auto-regressive (AR) model to predict the users' requests. The above model-based algorithms need the assumption about the users' requests, which is hard to be obtained in the practical network.

{
The exploration and exploitation trade-off is further discussed to avoid the prior assumption of the popularity of the content for the cache placement solution.  \cite{Blasco_2014} introduces the content controller to learn the unknown popularities of contents by observing the instantaneous demand from users and discusses the relationship between the cache placement solution and the factors, such as the number of files, the number of users, the cache size, and the skewness of the popularity profile. \cite{Song_2017} and \cite{Jiang_2019} both consider using a multi-armed bandit for the cache placement problem. \cite{Song_2017} applies the semidefinite relaxation approach in the centralized cache placement situation, where the cache placement strategy of all BSs is jointly derived. Besides, \cite{Song_2017} also provides a distributed algorithm such that each base station could make their own decision. \cite{Jiang_2019} indicates that the cache placement is influenced not only by the popularity of each content but also by the users' preference. Because of the unknown users' preference, the proposed model observes the historical content and derives the cooperative cache content strategy. {It is noted that
\cite{Blasco_2014,Song_2017,Jiang_2019} do not discuss the influence of unknown user density to the cache placement solution.  Specifically,  \cite{Song_2017,Jiang_2019} assume the fixed user number in the wireless environment and optimize the cache placement solution. However, the unknown user density will slow down the learning efficiency because the popularity of content is calculated based on the number of users' requests. In this paper, we consider the cache placement under the dynamic environment. At each time slot, the user number in the wireless network changes following a random distribution. We propose the Extended MAB to learn the user density and popularity concurrently. With the Extended MAB, the cache placement solution under the unknown user density and popularity could be learned faster, which further improves the efficiency of cache placement solution.}

How the overlapped region among multiple edge servers affect the caching design is still an open problem. Little research focuses on the how to make full use of the benefit from the multiple edge servers responding the requests.
\cite{Chattopadhyay_2018} assumes part of base stations know about the popularity of content and proposes Gibbs sampling-based method based on the knowledge of contents stored in its neighboring base stations. This method cannot be conducted at a environment with totally unknown popularity of content. In \cite{Xu_2020}, the number and location of mobile users are defined in advance and fixed. However, the assumption of the pre-determined location and requests of mobile users is still unrealistic in the practical network.
{In this work,  we propose a general cache placement strategy without the determined number and locations of users and maximize the number of satisfied users in the whole region.
We discuss the influence of the mis-calculation of the user density and popularity, which is caused by multiple edge servers responding the requests. To solve the problem, we propose a time-division method for addressing the problem introduced by the mis-calculation of user density and popularity. }}

\subsection{Multi-armed bandit}

\cite{Mersereau_2008} points out that the independence of arms in the multi-armed bandit causes the increase of convergence time with a large number of arms. To solve the problem, \cite{Mersereau_2008} introduces a bandit with the mean reward of arms following a linear function.
\cite{Rusmevichientong_2010} proposes a multi-dimensional linear bandit and the cumulative Bayes risk under an arbitrary policy is at least $\mathcal{O}(\sqrt{T})$, where $T$ denotes the running time of the proposed algorithm. The complexity is lower than $\mathcal{O}(\log{T})$ in a standard multi-armed bandit problem.

\cite{Atan_2015} proposes a Global Multi-armed Bandit (GMAB), in which arms are globally informative through a global parameter. The rewards in GMAB follow different distributions but with the same parameter. The GMAB model has fewer constraints on the reward distribution, hence can cover more scenarios. \cite{Shen_2018} extends the GMAB to GGB to handle nonmonotonic but decomposable reward functions, multidimensional global parameters, and switching costs. The proposed greedy algorithm converges to the optimal arm in a finite time period. According to the results, the proposed algorithm significantly outperforms existing bandits solutions. \cite{Wang_2018} proposes group-based multi-armed bandits by combining the standard multi-armed bandit and the GGB. Moreover, the paper proposes Upper Confidence Bound-greedy (UCB-g) algorithm to solve the regional bandit model. The proposed strategy achieves order-optimal regret by exploiting the intra-region correlation and inter-region independence.

Combinatorial MAB (CMAB) is proposed in \cite{Chen_2013}, which allows the agents to play a set of arms at each time. By minimizing the ($\alpha$-$\beta$) approximation regret, the online CMAB could converge to the optimal solution. The CMAB   achieves $\mathcal{O}(log n)$ distribution-dependent regret after $n$ times playing. \cite{Jiang_2019, Jiang_20192, Song_2017} all refer to the structure of the CMAB to solve the cache placement problems in different MEC models.

In this paper, we take the benefits from the CMAB and propose the Extended MAB, which admits choosing multiple arms at each time, to solve the cache placement in the MEC network.
The main difference between the proposed Extended MAB and other MABs lies in that the proposed Extended-MAB includes both the individual parameter and global parameter.

\section{System model}\label{sm}

We consider a cache-enabled MEC network with $M$ edge servers and use $m$ to denote the index of the edge server.
There are $N$ contents in the network and the contents form the content set $\mathcal{N},\mathcal{N}\triangleq \left\{1,2,...,N\right\}$. In this paper, we consider the data size of each content is the same $\footnote{If the data size of each content is different, we divide them into small pieces with same size and regard them as multiple individual contents.}$.
We define $m$ could cache $K$ contents and the set of caching contents at edge server $m$ forms an combination $\mathbbm{I}_{m,t}$. We use $i_{m,t}$ to denote the index of caching contents of edge server $m$ at $t$.  The number of combination is denoted as $C$, $C = {{N}\choose{K}}$.

{
In this paper, we study the influence of unknown users' preferences on the cache placement solution and ignore the attributes of wireless transmission.
We assume that edge servers could successfully transmit content to the users in its serving region. At each time slot $t$, the edge servers broadcast the caching contents to all users in its serving region, and the caching content could be successfully received. In this case, if the requested contents are cached at edge servers, the requests are directly satisfied without needing to fetch the contents from the central server.} %Otherwise, the requests are sent to the central server and responded to by the backhaul transmission. As this situation is not related to the caching placement, we do not discuss it in what follows.
{The edge server only needs to receive a signal to indicate whether the user device is satisfied by the caching content. Based on the satisfied user number, the edge server estimates the user density and popularity of contents to further optimize the cache placement solution.
%%here

We investigate two different scenarios of the distribution of edge servers as illustrated in Fig.~\ref{indcahing} and Fig.~\ref{coopcaching}. Fig.~\ref{indcahing} presents an \emph{individual edge server scenario} in which the serving region of each edge server does not overlap with other edge servers. The yellow circle in Fig.~\ref{indcahing} indicates the serving region of an edge server. If the user requests a cached content, the edge server is capable of transmitting the content without backhaul transmission as the green lines showing. Otherwise, the central server transmits them to the user. The corresponding transmissions are denoted by red lines. In Fig.~\ref{coopcaching}, we discuss the \emph{cooperative edge server scenario}, where the users have the chance of being served by multiple edge servers. As illustrated in Fig.~\ref{coopcaching}, the users in region 1 and region 3 are served by their corresponding edge servers, but the users in region 2 could be served by both two edge servers. If the user in the overlapped location requires a content, it takes the benefit from the cache space so long as one of the edge servers caches this content.  In this case, if all edge servers choose the most popular content to cache, the overall cache placement policy may only achieve a sub-optimal performance.   %We give an extreme situation where the serving region of the two edge servers in the network is totally overlapped and each of them could cache $1$ content. If each edge server has only one cache unit, the best cache placement policy is caching the most popular content and the second popular content separately in the edge servers to meet more users' requirements. 
We define the location of each edge severs is deterministic, the information of the overlapped region is assumed to be known by all edge servers.

\begin{figure}[t]
\centering
\includegraphics[width=0.5\textwidth]{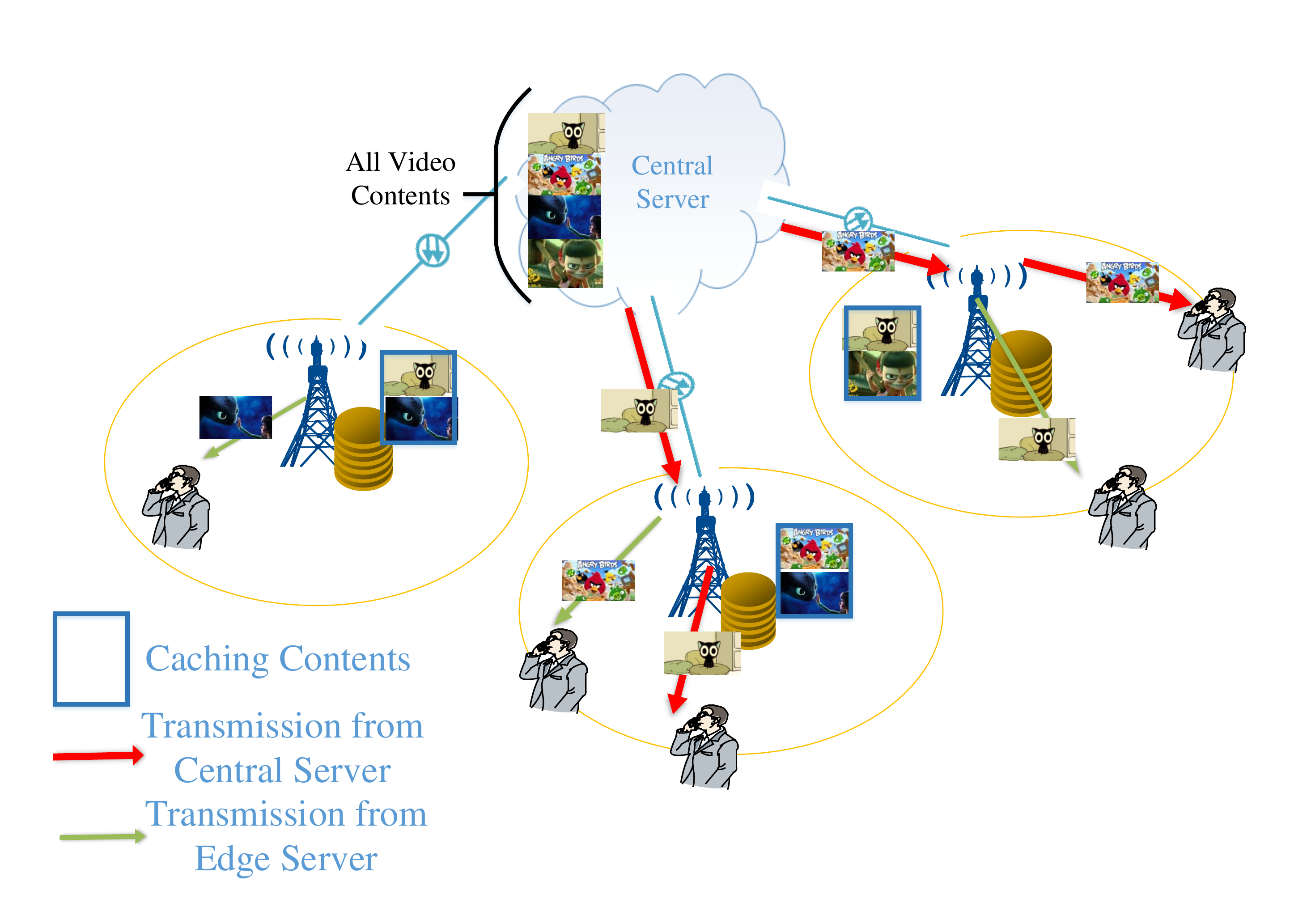}
\caption{Illustration of MEC network with non-overlapped serving region.}
\label{indcahing}
\end{figure}

\begin{figure}[t]
\centering
\includegraphics[width=0.5\textwidth]{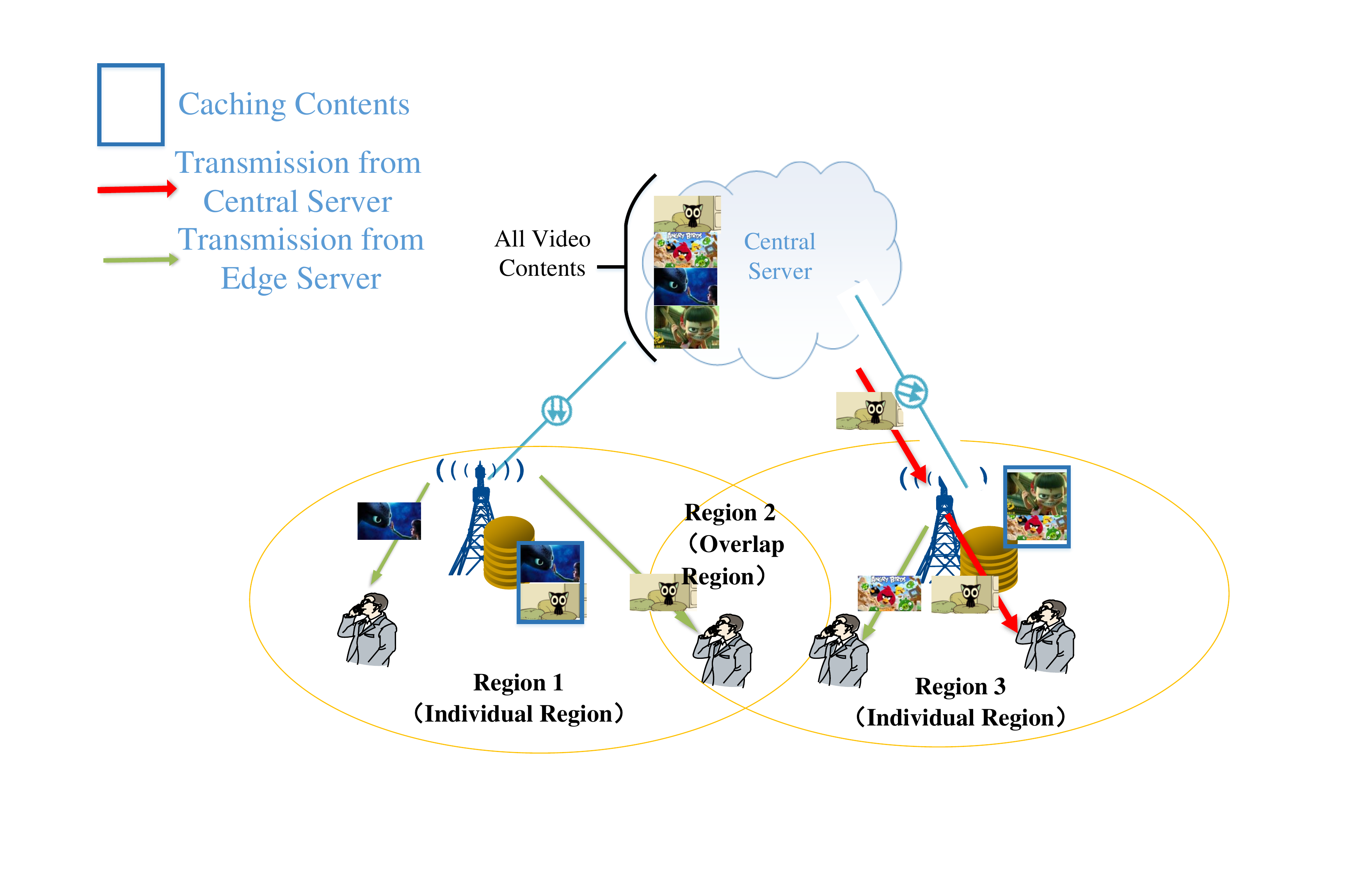}
\caption{Illustration of MEC network with overlapped serving regions.}
\label{coopcaching}
\end{figure}

We define the serving region of $m$ as $R_m$ and the overall region of the MEC network is $R^*$. The relationship between the serving region of edge server $m$ and the whole region is
\begin{equation}
\forall m, R_m\leq R^*\leq\sum_{m = 1}^M R_m.
\end{equation}
When the serving region of all edge server are non-overlapped, $R^*=\sum_{m = 1}^M R_m$ holds.}

We consider users distributed in the network following Poisson point process (PPP) at each time slot $t$. The user density reflects the ratio of actual user number $u^*_t$ to the area size of serving region, which is denoted as $v_{t}, v_{t} = \frac{u^*_t}{R^*}$.
We assume that the user density follows a certain distribution $v_{t}(\theta)$ with expectation $\mu(\theta)$, where $\theta$ is unknown.   Moreover, the expectation of user density function satisfies the following conditions\cite{Atan_2015}.

\begin{itemize}
\item For $\theta,\theta'\in\Theta$, there exists $D_1> 0$ and $0<\gamma_1<1$ such that $|\mu^{-1}(\theta) - \mu^{-1}(\theta') | \leq D_1|\theta-\theta'|^{\gamma_{1}}$, where $\mu^{-1}$ is the inverse function of $\mu$.

\item For $\theta,\theta'\in\Theta$, there exists $D_2> 0$ and $0<\gamma_2\leq1$, such that $|\mu(\theta) - \mu(\theta')|\leq D_2|\theta - \theta'|^{\gamma_2}$
\end{itemize}

Specifically, in this paper, we apply function $\mu(\theta) = w\theta^k + b$ as the expectation of user density. The exponential parameter $k$ reflects the influence from the variable $\theta$. The parameter $w$ and $b$ could be changed to adapt to the different environment.

The user number of requesting content $n$ at $t$ in the overall region is $u_{n,t}, \sum_{n=1}^N u_{n,t} = u^*_t$.
Without loss of generality, we assume that the popularity of the contents are $p_1, p_2,...,p_N$ with $p_1\ge p_2\ge ...\ge p_N$ and $\sum_{n=1}^N p_n =1$.
Under the above the description of the popularity of contents and the user density, we could derive the amount of requiring content $n$ denoted by $u_{n,t}= p_n u^*_t$. It is noted that both the parameter of user density $\theta$ and the popularity  $p_n$ of certain content $n$ are unknown in a practical MEC network. The optimal cache placement of each edge server depends on the accurate estimation of the parameters.

{
When users' requests are satisfied by the cached content at the edge server, the user device sends a signal to inform the edge server. The edge servers aim to maximize the number of serving users by the caching contents, which is written as $u_{\mathbbm I_{m,t}}$.}
The problem is formulated as
\begin{prb}
\label{prb1}
\begin{equation}
\begin{split}
&\max\lim_{T\to\infty}\sum_{t = 1}^T \sum_{m=1}^{M} u_{\mathbb I_{m,t}}\\
&=\max\lim_{T\to\infty}\sum_{t = 1}^T \sum_{m=1}^{M}  \sum_{i_{m,t}\in \mathbb I_{m,t}}p_{i_{m,t}}v_{t}(\theta) R_{m}\\
&\mbox{s.t.}\quad (1),\quad\quad |\mathbb I_{m,t}|\leq K.
\end{split}
\end{equation}
\end{prb}

According to problem~\ref{prb1}, the formulated optimization problem is difficult to solve as the user density and preference are unknown. In this paper, we refer to the MAB model to learn the unknown parameter  and further adjust the cache placement solution online, which overcomes the weakness of traditional method in the environment with unknown parameters. We propose the Extended MAB, which combines the attributes of the MAB and the GGB and provides a more efficient way to determine the cache placement in the MEC network in scenarios of both individual edge servers and cooperative edge servers.  {The edge server estimates the expected parameters at each time slots based on the number of the user satisfied by cached contents and does not need to record the user number for fetching each specific content.}

\section{Extended Multi-armed Bandit}

In this section, we present the Extended MAB for following cache placement optimization, which incorporates the global parameter and the individual local parameters concurrently. 

{We first briefly introduce the model of standard MAB. The MAB is proposed to solve the exploration and exploitation dilemma in the unknown environment optimization.  Given the countable arms, i.e., the optional actions, the agent chooses one of the arms as action at each time.  A reward is returned after the arm is taken. In the beginning, the agent does not know the reward of each arm and explores the environment by randomly choosing the arms and acquires the rewards. With the accumulation of knowledge of the reward of each arm, the agent could choose the optimal arm to maximize the summary of the reward. If the agent chooses the best-estimated arm too early, the loss of the reward may occur because of the lack of knowledge of the environment. However, if the agent always chooses an arm randomly, it can not make full use of the knowledge of environment and derive the optimal action.%%%%%%%%24

At each time $t$, the agent takes $a_t$ and acquire the reward $r_{a_t}$, the objective is written as %%%%%%%%24
\begin{equation}
\lim_{T\to\infty}\max_{a_t}\sum_{t = 1}^T r_{a_t}.
\end{equation}

In the GGB, the reward is determined by the reward distribution function $v_{a_t}(\theta)$, whose expectation is $\mu_{a_t}(\theta)$. The global parameter $\theta$ is shared by all arms. The agent chooses the arm and acquires the knowledge of the parameter $\theta$. If $\theta$ is accurately estimated, the optimal arm could be chosen directly without any other exploration. The objective in GGB is written as%%%%%%%%24
\begin{equation}
\begin{split}
&\lim_{T\to\infty}\max_{a_t}\sum_{t = 1}^T r_{a_t}(\theta)\\
=&\lim_{T\to\infty}\max_{a_t}\sum_{t = 1}^T\mathbb{E}(v_{a_t}(\theta))\\
=&\lim_{T\to\infty}\max_{a_t}\sum_{t = 1}^T\mu_{a_t}(\theta).
\end{split}
\end{equation}

The Extended MAB to integrate the global parameter and the individual parameter of each arm in one bandit model. The reward of each arm depends on two types of parameters, which is denoted as $r_{a_t} = p_{a_t} v(\theta)$. We use $\mu(\theta)$ to denote the expectation of distribution function $v(\theta)$. The objective of Extended MAB is to maximize the accumulated reward, which is
\begin{equation}
\begin{split}
&\lim_{T\to\infty}\sum_{t = 1}^T\max_{a_t} p_{a_t} \mathbb{E}(v({\theta}))\\
=&\lim_{T\to\infty}\sum_{t = 1}^T\max_{a_t} p_{a_t} \mu({\theta}).
\end{split}
\end{equation}

The knowledge of global parameter could be obtained no matter which cache placement policy is chosen, while the individual parameter can only be obtained when this content is cached.}%%%%%%%%24}
%%1202

%In the next section, we apply the proposed Extended MAB to make cache placement decisions for the non-overlapped region at each edge server. The performance of the proposed Extended MAB is further studied by analyzing the regret bound. We prove that the proposed algorithms could converge to the optimal cache placement when time $t\to\infty$.

\section{Cache Placement in Individual Edge Server Scenario}
In the individual edge server scenario, all edge servers have their exclusive serving regions and make independent cache placement.  Given the constraints of the cache capacity $K$, the edge server $m$ chooses a cache combination $\mathbbm I_{m,t}$.  The $u_{\mathbbm I_{m,t}}$ indicates the total amount of all satisfied users by the caching combination $\mathbbm{I}_{m,t}$ at time $t$, which can be written as
\begin{equation}
u_{\mathbbm I_{m,t}}= \sum_{i_{m,t}\in\mathbbm I_{m,t}} u_{i_{m,t}}.
\end{equation}

We transform Problem 1 as maximizing the average number of satisfied user's requests in the infinite time duration, which is presented as
\begin{equation}
\begin{split}
&\lim_{T\to\infty}\sum_{t = 1}^T\max_{\mathbbm I_{m,t}}  u_{\mathbbm I_{m,t}}\\
= &\lim_{T\to\infty}\sum_{t = 1}^T\max_{\mathbbm I_{m,t}} \sum_{i_{m,t}\in\mathbbm I_{m,t} }\mathbb{E}( u_{i_{m,t}}).
\end{split}
\end{equation}

The user number of requesting a content depends on the user density in the serving region and the popularity of the content. To maximize the satisfied user number by the caching content, we use the Extended MAB to learn the cache placement policy.  We define the cache combination $\mathbbm{I}_{m,t}$ as the arm that the agent chooses. The processs is divided into 3 parts, which are initialization, exploration and exploitation, and parameter estimation. The details are shown as follows.

\subsubsection{Initialization}
At the beginning of the algorithm, the edge server $m$ does not know the information of contents and caches nothing. The users in the region send requests to $m$ and $m$ fetches the requesting contents from the central server.  The estimated parameters of user density $\hat\theta$ and popularity $\hat p_n$ of each content $n$ are initialized by $0$.  We use $B$ to denote the size of a batch, in which the edge server makes the same cache placement policy.

\subsubsection{Exploration and exploitation} The trade-off between exploration and exploitation follows a determined rule. If time $t$ satisfies $\log_2(t*B)\in\mathbb{N}$, the random cache placement combination is chosen. Otherwise, the edge server chooses the best combination according to the estimated parameters. With this policy, when the parameters are correctly estimated, the policy decreases the randomness and makes cache placement decisions according to the estimated parameters.

\subsubsection{Parameters estimation}
{The expected reward $\bar X_{\mathbbm{I}_{m,t}}$ of combination $\mathbbm{I}_{m,t}$ until time $t$ is updated after choosing it, which is calculated based on the previously acquired reward and current reward. We use $\bar X_{\mathbbm{I}_{m,t}}^*$ to denote the updated expected reward of combination $\mathbbm{I}_{m,t}$. Once the combination $\mathbbm{I}_{m,t}$ is chosen, the expected reward of $\mathbbm{I}_{m,t}$ at time $t$ is updated as
\begin{equation}
\bar X_{\mathbbm{I}_{m,t}}^* = \frac{M_{\mathbbm{I}_{m,t}}(t-1)\bar X_{\mathbbm{I}_{m,t}} +  X_{\mathbbm{I}_{m,t}}}{M_{\mathbbm{I}_{m,t}}(t-1) + 1}
\end{equation}
where $M_{\mathbbm{I}_{m,t}}(t-1)$ denotes the number of choosing $\mathbbm{I}_{m,t}$ until time $t-1$.
}
Since the combination with content $n$ has $K-1$ remaining space to cache, the combination with $n$ has ${N-1}\choose{K-1}$ different choices. Hence, the combination set, which includes content $n$, is composed by ${N-1}\choose{K-1}$ combinations. If we sum up the reward of all combinations, we have
\begin{equation}
\sum_{c_t= 1}^{C} \bar X_{c_t} ={{N-1}\choose {K-1}} \sum_{n_t=1}^{N}\bar X_{n_t}
={{N-1}\choose {K-1}} \mu(\hat\theta).
\end{equation}

The estimated global parameter $\hat \theta$ is derived from the sum of expectation rewards of all combinations. Based on Eq. (9), the expected global parameter is given as
\begin{equation}
\hat\theta = \arg\min_{\theta\in\Theta} \bigg{|}{{N-1}\choose{K-1}}* \mu(\theta) - \sum_{c_t= 1}^{C} \bar X_{c_t}\bigg{|}.
\end{equation}

The popularity of each cache placement combination $c$ could be derived by%%%%%%%%24
\begin{equation}
\hat p_c = \bar X_{c}/(\mu(\hat \theta)).
\end{equation}

{{Until now, we could derive the estimated popularity of each combination, which supports choosing the cache placement solution. Hence, we do not need to know the estimated popularity of each individual content.}} The details of the process are given in Algorithm 1.
\begin{algorithm}[]
\caption{Cache Placement for Individual Edge Server Scenario Based on Extended MAB} %算法的名字
\begin{algorithmic}
\STATE Initialize the cache size $K$; the number of combinations $C = {N \choose K}$; the distribution function of user density $\mu(\theta)$; $\hat\theta = 0$; $\hat a_c = 0$; $M_{c} = 0$, $t = 1$; the batch size $B$;
\WHILE{$t\ge1$}
\STATE{$b =0$}
\WHILE{$b<B$}
\IF{$\log_2(t) \in\mathbb{N}$}
\STATE Select combination $\mathbb{I}_{m,t}$ randomly for set $\mathcal{C}$;
\ELSE
\STATE Select combination $\mathbb{I}_{m,t}$  which satisfies $\mathbb{I}_{m,t} = \arg\max_{c,t} \hat p_{c,t} \mu(\hat\theta)$;
\ENDIF
\STATE $\bar X_{c_t} = \bar X_{c_t} $ for $c\in\mathcal{C}\backslash \mathbb{I}_{m,t}$;
\STATE Update $\bar X_{\mathbb{I}_{m,t}}$ with (8);
\STATE Update $\hat\theta$ with (10);
\STATE Update $\hat p_c$ with (11);
\STATE $M_{\mathbb{I}_{m,t}}(t) = M_{\mathbb{I}_{m,t}}(t-1) +1$;
\STATE $b = b+1$;
\ENDWHILE
\ENDWHILE
\end{algorithmic}
\end{algorithm}

We show the regret analysis and complexity analysis in the following part of the section. We indicate the proposed Extended MAB based cache placement solution approaches to optimal cache placement policy.

\begin{prop}
In the individual edge server scenario, the proposed Extended MAB converges to the expected regret $(\frac{\log T}{T}* B + 2\exp\big(-2({{N-1}\choose{K-1}}\frac{\sigma_1}{\bar D_{1}})^{2\gamma_1} T\big) + 2N\exp \big(-2(\frac{\sigma_2}{D_2}\mu(\theta^*)^{2\gamma_2} T) \big))*gap_{max}$
at time $T$,
where parameters $D_1$, $D_2$, $\gamma_1$, $\gamma_2$ and $\sigma_1$, $\sigma_2$ satisfy $D_1>0$,$D_2>0$, $0<\gamma_1<1$, $0<\gamma_2<1$, $\sigma_1>0$, $\sigma_2>0$, $gap_{max}$ denotes the maximal value gap between the optimal reward and any other rewards.
\end{prop}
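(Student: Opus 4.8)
The plan is to decompose the cumulative regret into three additive sources and bound each separately, then recombine. Formally, the regret up to time $T$ is $\mathrm{Reg}(T)=\sum_{t=1}^{T}\big(u_{\mathbbm I^*}-u_{\mathbbm I_{m,t}}\big)$, where $\mathbbm I^*$ is the best combination under the true parameters. I would first isolate the \emph{forced-exploration} rounds, i.e.\ those $t$ with $\log_2(tB)\in\mathbb N$: there are at most $\log_2(TB)$ such rounds in $[1,T]$, each contributing at most $gap_{max}$, and since the algorithm works in batches of size $B$ this accounts for the $\frac{\log T}{T}\cdot B\cdot gap_{max}$ term (after normalizing to the per-round / average regret as the proposition apparently does). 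All remaining rounds are exploitation rounds, on which a loss is incurred only when the estimated parameters are inaccurate enough to flip the arg-max, so the remaining two terms must come from tail bounds on the estimation error of $\hat\theta$ and of the combination popularities $\hat p_c$.

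The second step is to control the global-parameter estimate. Since $\hat\theta$ is defined (Eq.~(10)) as the minimizer of $\big|{{N-1}\choose{K-1}}\mu(\theta)-\sum_{c}\bar X_{c}\big|$, a deviation $|\hat\theta-\theta^*|>\varepsilon$ forces $\big|\sum_c \bar X_c - {{N-1}\choose{K-1}}\mu(\theta^*)\big|$ to exceed roughly ${{N-1}\choose{K-1}}$ times the modulus-of-continuity gap implied by the H\"older condition on $\mu^{-1}$ (the first bulleted assumption with constants $D_1,\gamma_1$). Because $\sum_c \bar X_c$ is an average of bounded i.i.d.-type reward observations accumulated across all combinations, Hoeffding's inequality gives a bound of the form $2\exp\big(-2({{N-1}\choose{K-1}}\frac{\sigma_1}{\bar D_1})^{2\gamma_1}T\big)$ on the probability that $\hat\theta$ is $\sigma_1$-bad; here $\bar D_1$ absorbs $D_1$ and the reward range, and the exponent $2\gamma_1$ enters precisely through inverting the H\"older modulus. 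The third step is the analogous argument for $\hat p_c=\bar X_c/\mu(\hat\theta)$: conditioning on $\hat\theta$ being accurate, a bad $\hat p_c$ requires $\bar X_c$ to deviate from its mean $p_c\mu(\theta^*)$, and another Hoeffding bound together with the second H\"older condition (constants $D_2,\gamma_2$) yields per-content failure probability $2\exp\big(-2(\frac{\sigma_2}{D_2}\mu(\theta^*))^{2\gamma_2}T\big)$; a union bound over the $N$ contents produces the factor $2N$. On the complementary ``good'' event — $\hat\theta$ within $\sigma_1$ and all $\hat p_c$ within $\sigma_2$ — one argues that the $\sigma_i$ are chosen small enough (relative to the reward gaps) that $\arg\max_c \hat p_c\mu(\hat\theta)=\mathbbm I^*$, so no exploitation regret is incurred. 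Multiplying the total failure probability by $gap_{max}$ and adding the exploration term gives exactly the claimed expression.

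The main obstacle I anticipate is the book-keeping that links the arg-min characterization of $\hat\theta$ to a clean Hoeffding-style concentration statement with the stated exponent. Two subtleties need care: (i) the samples feeding $\bar X_c$ are collected adaptively (the batching and the exploration schedule make the number of pulls $M_c(t)$ random and data-dependent), so one must argue that by time $T$ each combination has been sampled $\Omega(T)$ times — this is where the forced-exploration schedule $\log_2(tB)\in\mathbb N$ and the finiteness of $C$ are used — or else invoke a martingale/Azuma version of the bound; and (ii) the H\"older inequalities only give a \emph{one-directional} transfer (small error in $\mu$-space $\Rightarrow$ small error in $\theta$-space) so one has to make sure the inequality is applied in the direction that is actually available, which is why the $\gamma_i$ appear in the exponents rather than as multiplicative constants. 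A secondary, more cosmetic obstacle is reconciling the normalization: the leading term is written as $\frac{\log T}{T}B\cdot gap_{max}$ (a per-round rate) while the exponential terms are written as probabilities times $gap_{max}$, so the statement is really bounding $\mathbb E[\mathrm{Reg}(T)]$ up to this mixed normalization, and the proof should state once and for all whether it targets cumulative or time-averaged regret. Everything else — the two Hoeffding applications, the union bound over $N$ contents, and the gap argument showing the good event yields zero exploitation loss — is routine once the sampling-count lower bound is in place.
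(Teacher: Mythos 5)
Your proposal follows essentially the same route as the paper: it splits the regret into the forced-exploration rounds (counted via the $\log_2$ schedule, giving the $\frac{\log T}{T}B\cdot gap_{max}$ term) and the exploitation rounds, where the loss is bounded by $gap_{max}$ times the probability of mis-estimating $\hat\theta$ or some $\hat p_n$, each controlled by a Hoeffding bound transferred through the H\"older conditions, with a union bound over the $N$ contents. The subtleties you flag (adaptive sample counts, the direction of the H\"older transfer, and the mixed cumulative-versus-averaged normalization) are real but are also left unaddressed in the paper's own argument, so your attempt matches the published proof in both structure and level of rigor.
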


\begin{proof}

To find the optimal $K$ arms, we need to reduce the gap between the practical maximum reward and the evaluated maximum reward as much as possible. We define the chosen arms indexed by $I_1,I_2,...,I_K\in\mathbb{I}_{m,t}$. The regret of expected reward between the ideal best cache placement and the estimated cache placement action at each time slot is denoted as
\begin{equation}
\begin{split}
   Reg =\sum_{k =1}^K a_k\mu(\theta) - a_{I_k}\mu(\theta).
\end{split}
\end{equation}

We analyze the expected regret $Reg(T)$ to prove that the Extended MAB could find the best cache placement, which is written as
\begin{equation}
Reg(T) = \frac{1}{T}\sum_{t = 1}^{T} \sum_{k = 1}^{K} (a_k\mu(\theta) - X_{I_{k,t}}).
\end{equation}

We separately discuss the regrets of the algorithm with exploration action and exploitation action. We firstly calculate the regret of random cache placement.  The regret of random cache placement is defined as $Reg_{1}(T)$ , which satisfies
\begin{equation}
\begin{split}
Reg_{1}(T) &\leq \frac{\log_2 T}{T}* B*gap_{max}*K
\end{split}
\end{equation}

The proof of Eq.(14) is shown as appendix~\ref{ap1}.

The regret with the exploitation action $Reg_{2}(T)$ is written as
\begin{equation}
\begin{split}
Reg_{2}(T)\leq& (2\exp\big(-2({{N-1}\choose{K-1}}\frac{\sigma_1}{\bar D_{1}})^{2\gamma_1} T\big)\\+&2N\exp \big(-2(\frac{\sigma_2}{D_2}\mu(\theta^*)^{2\gamma_2} T) \big))*gap_{max}.
\end{split}
\end{equation}

The proof of Eq.(15) is shown as appendix~\ref{ap2}.

Since $p_n<1$ is always satisfied, by summing up $Reg_{1}(T)$ and $Reg_{2}(T)$, the total regret of Algorithm 1 is written as
\begin{equation}
\begin{split}
Reg(T) \leq& \frac{\log T}{T}*B*gap_{max}*K \\
+&2\exp\big(-2({{N-1}\choose{K-1}}\frac{\sigma_1}{\bar D_{1}})^{2\gamma_1} T\big)*gap_{max} \\
+&2N\exp \big(-2(\frac{\sigma_2}{D_2}\mu(\theta^*)^{2\gamma_2} T) \big)*gap_{max}.
\end{split}
\end{equation}

The estimated optimal cache placement is taken in the exploitation policy.
\end{proof}

%\begin{itemize}
     {The complexity of the Extended MAB based cache placement solution in individual scenario depends on the number of contents and the capacity of cache size. At each iteration, estimating the parameters requires computing the expected reward of all combinations, the complexity of which is $\mathcal{O}({{N}\choose{K}})$. While finishing the estimation of parameters, the edge server searches the cache placement solution with the highest expected reward. The complexity of searching algorithm is $\mathcal{O}({{N}\choose{K}})$. Hence, the complexity of the Extended MAB based cache placement solution is $\mathcal{O}({{N}\choose{K}})$.}
%\end{itemize}

\section{Cache Placement in Cooperative Edge Server Scenario}

In this section, we discuss the cache placement in cooperative edge server scenarios with overlapped serving regions.  We first draw the lesson from Algorithm 1 and propose a centralized Extended MAB algorithm with the assistance of the central server. Then, to reduce the size of the action space of the centralized algorithm, we propose a decentralized multi-agent Extended MAB algorithm, where the edge server makes decisions individually but achieves the global
optimal cache placement with low computational complexity.

\subsection{Centralized Extended MAB based cache placement solution}
In a real large-scale MEC network,  the users in the overlapped region could receive the requesting content from one or more edge servers.  If all edge servers choose the most popular contents to cache, the users in the overlapped regions will lose the extra chance to be satisfied. To make full use of the cache space, we firstly propose a centralized cache placement solution, in which a central server is introduced to make cache placement policy. The scope of the serving region and the overlapped region is aware by the central server in advance.

The cache placement combination of each edge server is regarded as a sub-combination. The central server chooses $M$ sub-combinations for $M$ edge servers. Assembling all $M$ sub-combinations $\mathbbm{I}_{m,t}$ produces  a macro-combination $C_t$. The reward is determined by the number of satisfied users in the global network. The popularity of contents and user density are iteratively estimated according to the accumulated reward from the beginning to the present. The cache placement based on the centralized Extended MAB bandit is described as follows.

\subsubsection{Initialization} Given the cache size $K$ and the amount of the total contents $N$. The  macro-combination space $\mathcal{C}$ has $C$ elements, where $C={N\choose K}^M$. The reward is recorded in the $\bar{X}_{c}$ for macro-combination $c$. The parameters of the user density and the popularity of each macro-combination are initialized by $0$.

\subsubsection{Exploration and exploitation} The policy of balancing exploration and exploitation is same as the situation with individual edge server. If $t$ satisfies $\log_2(t)\in\mathbb{N}$, the central server randomly chooses the macro-combination. Otherwise, the central server chooses the estimated best macro-combination. It is noted $B$ in the cooperative scenario should be set larger than the individual scenario because of more choice of macro-combination.%%%%%%%%24

\subsubsection{Parameter estimation}

The parameter of the user density is estimated after obtaining the reward at each time slot. Later, we estimate the popularity of each combination under the estimated user density. 
We firstly determine how many times that content $n$ is shown in all macro-combinations. As there are ${N-1}\choose{K}$ sub-combinations without content $n$ , we have ${{N-1}\choose{K}}^M$ macro-combinations are in the environment excluding the content $n$. Hence we obtain that ${{N}\choose{K}}^M-{{N-1}\choose{K}}^M$ macro-combinations contain combination $n$. The user density is derived as
\begin{equation}
 \mu(\hat\theta) = \frac{\sum_{c= 1}^C \bar X_{c_t}}{{{N}\choose{K}}^M-{{N-1}\choose{K}}^M}.
\end{equation}

After that, the popularity of each macro-combination could be derived according to Eq. (11).

\subsection{Decentralized Extended MAB based cache placement solution}
{
With the increase of the edge servers' amount, the space of the cache placement combination grows exponentially. When adding an extra content, the number of cache placement combinations is increased by ${{N+1}\choose{K}}^M - {{N}\choose{K}}^M$, which makes it difficult to be implemented in a practical large-scale wireless network. To solve this problem, we introduce a decentralized framework and let edge servers make their own cache placement solution based on the proposed Extended MAB.

There are two issues to be addressed in the cooperative edge server scenario as they do not occur in the individual edge server case. First, the overlapped serving region leads to a miscalculation of the parameters. We take a case with 2 edge servers sharing the overlapped serving region as an example and denote two edge servers as $B_0$ and $B_1$. When the edge servers $B_0$ and  $B_1$ both cache content $n$, the users requesting content $n$ in the overlapped region are satisfied only by one of them. The reward received by the edge servers, which reflects the actual satisfied user number, is always no more than the reward in the individual case.%%%%%%%%24

To avoid the mis-estimation, we propose a time-division parameter estimation framework, in which each edge server is assigned a dedicated primary time slot and multiple sharing secondary time slots. Edge servers update the parameters only when they are in the primary time slot. Since there are $M$ edge servers in the network, $M$ time slots are set as a group. In each group,  the edge servers have  $1$ exclusive primary slot and $M-1$ shared secondary slot. The proposed time-division framework is shown in Fig.~\ref{td}. The users in the overlapped region tend to be served by the edge server in the primary slot. Based on this scheme, if  $B_0$ and $B_1$ cache the same content $n$ and $B_0$ is in the primary state slot,  $B_0$ has priority to receive the response from the users in the overlapped region, which reflects the actual satisfied user number of the overlapped region. Then $B_0$ updates the parameters according to the reward. Meantime, $B_1$ neglects the parameter estimation of policy adjustment and only broadcasts all caching contents.

\begin{figure}[t]
\centering
\includegraphics[width=0.5\textwidth]{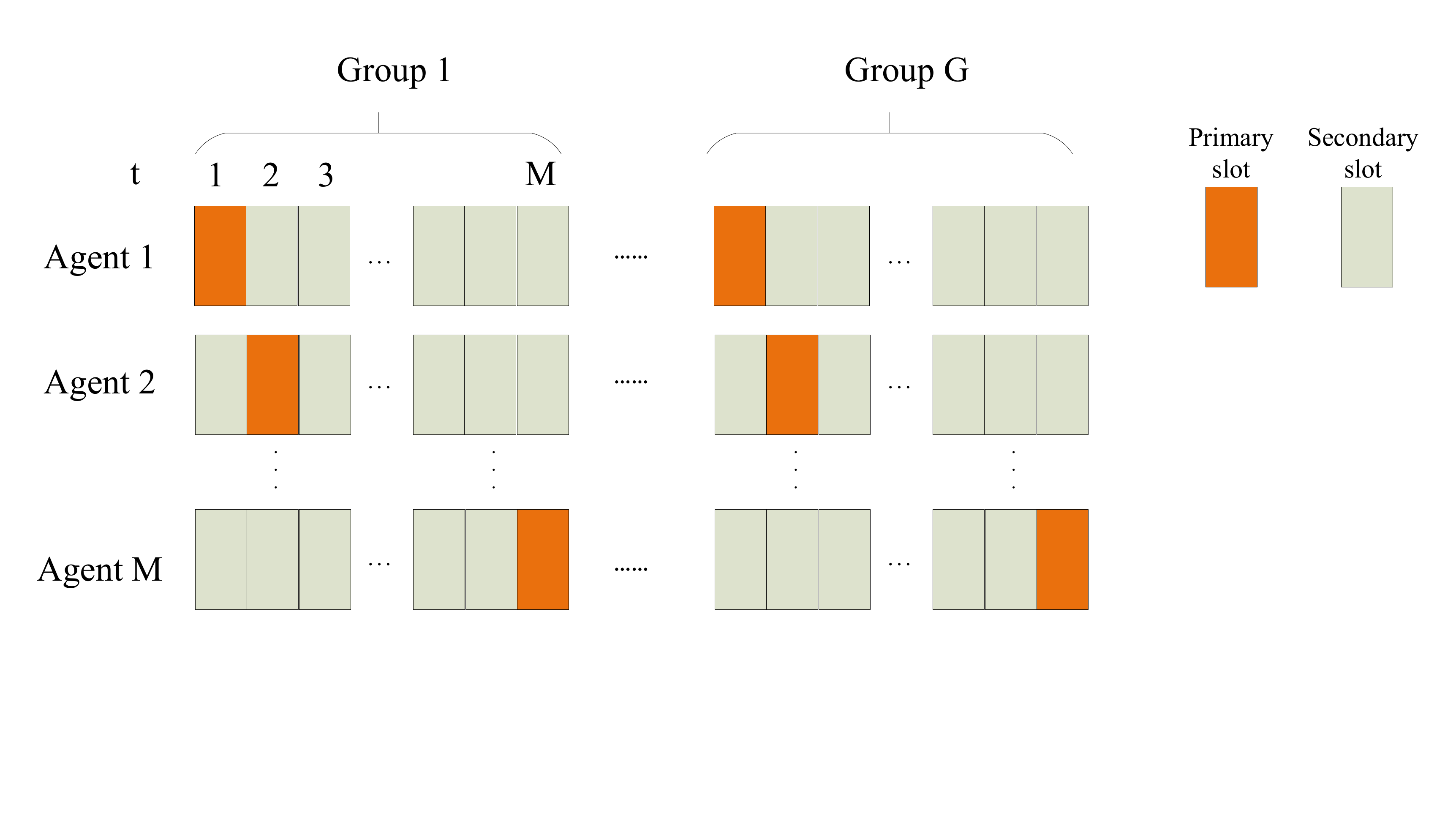}
\caption{Time division framework for multi-agent Extend MAB.}
\label{td}
\end{figure}

Second, the best cache placement of each edge server is determined not only by the estimated user density and popularity of the content but also by the cache placement of adjacent edge servers.  Under the accurately estimated parameters, the edge server cannot directly choose the cache placement solution with maximal expected reward because the edge servers which has overlapped region with it may have the same choice.

We assume each of $M$ edge servers have overlapped region with at least $1$ other edge server.
In the decentralized cache placement solution, to find the best cache placement strategy,  the edge servers transmit the cache placement solution to adjacent edge servers after it makes decisions in the primary time slot.  With the cache placement solution and size of the overlapped region,  the edge servers derive the popularity of each content under the exclusive primary time slot. Given the popularity  $p_{\mathbbm{I}_{m,t}}$ of each combination $\mathbbm{I}_{m,t}$, we first calculate the the popularity of combinations including $n$, denoted by $s_n$, as
\begin{equation}
    s_n = \sum_{n\in \mathbbm{I}} \bar p_\mathbbm{I}.
\end{equation}

It is noted that there are $\tbinom{N-1}{K-1}$ combinations including $n$, and $\tbinom{N-2}{K-2}$ combinations with other contents. Hence, the estimated popularity $\hat p_n$ is given by
\begin{equation}
   \hat p_n = \frac{s_n - \tbinom{N-2}{K-2}}{\tbinom{N-1}{K-1} - \tbinom{N-2}{K-2}}.
\end{equation}

{
The serving region is divided into different sub-regions based on the overlap with other edge servers.  Since the location of each edge server is predefined, we could identify edge servers that each sub-region belongs to and calculate the area size of each sub-region.} Under the estimated popularity of each content, we choose the content for $M*K$ cache units at all edge servers. %%%%%%%%24

\begin{prop}
We order the estimated popularity $p_n$ of each content $n$ from highest to lowest and use $f_i$ denote the index after the ordering. We define $S$ as the best cache placement set, $s\triangleq\left\{f_1,f_2,...,f_{M*K}\right\}$. In other word, the $M*K$ content with highest estimated popularity forms the best cache placement set $S$.
The contents of optimal cache placement solution for the cooperative edge server scenario always belong to best cache placement set $S$.
\end{prop}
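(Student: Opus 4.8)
The plan is to prove the statement by a standard interchange (exchange) argument on an optimal cache placement, combining the monotonicity of the content popularities $p_1\ge p_2\ge\cdots\ge p_N$ with a counting of cache units. First I would make the objective explicit for the cooperative setting: the number of users that content $n$ satisfies at time $t$ is $p_n\,v_t(\theta)\,|A_n|$, where $A_n\triangleq\bigcup_{m:\,n\in\mathbbm{I}_{m,t}}R_m$ is the union of the serving regions of all edge servers that cache $n$; hence maximizing the number of satisfied users in $R^*$ is equivalent to maximizing $\sum_{n=1}^N p_n|A_n|$ over all placements with $|\mathbbm{I}_{m,t}|=K$ for every $m$. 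This replaces the double-counting form of Problem~\ref{prb1} by the set-union form that is appropriate once serving regions overlap, and it is why $S$ must have size $M\cdot K$ rather than $K$.

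Next I would argue by contradiction. Suppose an optimal placement caches some content $n\notin S$ at an edge server $m_0$. The network has exactly $M\cdot K$ cache units and $|S|=M\cdot K$; since one of the (at most $M\cdot K$) distinct cached contents is $n\notin S$, at most $M\cdot K-1$ elements of $S$ can be cached anywhere, so there is a content $n'\in S$ cached by no edge server. Form a new placement by replacing the copy of $n$ in $\mathbbm{I}_{m_0,t}$ with $n'$. Removing $n$ from $m_0$ decreases $|A_n|$ by $|R_{m_0}\setminus\bigcup_{m\ne m_0:\,n\in\mathbbm{I}_{m,t}}R_m|\le|R_{m_0}|$, while inserting $n'$ at $m_0$ — it was cached nowhere before — increases $|A_{n'}|$ from $0$ to exactly $|R_{m_0}|$; no other $A_\ell$ changes. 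Hence the objective changes by at least $p_{n'}|R_{m_0}|-p_n|R_{m_0}|=(p_{n'}-p_n)|R_{m_0}|\ge 0$, because $n'\in S$ and $n\notin S$ force $p_{n'}\ge p_{MK}\ge p_n$. So the modified placement is at least as good; iterating the swap until no content outside $S$ remains yields an optimal placement whose contents all lie in $S$. If in addition the popularities are strictly separated at the boundary rank, i.e.\ $p_{MK}>p_{MK+1}$, every such swap is strictly improving, so in fact every optimal placement uses only contents of $S$; the standing popularity ordering is what is invoked here.

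The main obstacle — and essentially the only place where care is genuinely needed — is controlling the motion of $|A_n|$ under the swap, since $A_n$ is a union of regions rather than an additive quantity. I would establish the two one-line monotonicity facts used above: deleting one copy of $n$ shrinks $A_n$ by exactly the portion of $R_{m_0}$ not already covered by the other servers still caching $n$, which is at most $|R_{m_0}|$; and inserting $n'$ into a previously unused slot grows $A_{n'}$ by exactly $|R_{m_0}|$ (here $|R_{m_0}|>0$ as a serving region). Once these are in place, the inequality $(p_{n'}-p_n)|R_{m_0}|\ge 0$ closes the argument, and the pigeonhole step producing the free content $n'\in S$ is immediate from $|S|=M\cdot K$ equaling the number of cache units. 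A secondary point worth flagging is that the phrasing ``the optimal $\ldots$ always belong to $S$'' presumes either uniqueness of the optimum or a strict popularity gap at rank $M\cdot K$; absent that, the same argument still yields the slightly weaker — and, for driving the decentralized algorithm toward the benchmark, equally sufficient — claim that there exists an optimal placement contained in $S$.
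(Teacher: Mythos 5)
Your proof follows essentially the same route as the paper's: a pigeonhole count (the network has exactly $M\cdot K$ cache units and $|S|=M\cdot K$, so some $n'\in S$ is uncached) followed by an exchange of a cached $n\notin S$ for $n'$ that cannot decrease the objective. You are in fact more careful than the paper, which simply asserts that replacing the copies of $n$ with $n'$ ``must acquire a higher reward'': you track how the covered area $|A_n|$ moves under the swap in the overlapping-region objective, and you correctly flag that the ``always'' phrasing requires a strict popularity gap at rank $M\cdot K$ (otherwise only the weaker claim that \emph{some} optimal placement lies in $S$ follows).
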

\begin{proof}
We define the optimal cache placement solution of edge servers as $i^*$ and the obtained reward  is $r^*$.
Under the accurate estimation, if the content $n, n\in i^*$ does not belong to the cache placement set $S$, there is at least one content $n', n'\in S$ is not cached at edge servers. In this case, if we replace all caching $n$ with $n'$, we must acquire a higher reward. Hence, $i^*$ is not the optimal cache placement solution.
\end{proof}

The final cache placement solution is chosen from the cache placement set rather than the content set, which reduces the search complexity.
For each content, the edge server $m$ calculates the expected reward of caching it. The expected reward of content $n$ is the summary of the expected reward of content $n$ in different sub-regions. We assume that $R_m$ is divided to $P$ sub-regions, where $R_{m,p}$ denotes the $p$th sub-region of $R_m$.  We assume that $k_{m,p,n}$ edge servers share the sub-region $R_{m,p}$ and cache content $n$. If $k_{m,p,n}$ edge servers cache the same content $n$, the satisfied user number of caching $n$ is evenly assigned to each of them. The expected reward $\bar r_{m,n}$ is defined as
\begin{equation}
    \hat r_{m,n} = \sum_{p = 1}^P \frac{R_{m,p} *\mu(\hat\theta)*\hat p_n}{k_{m,p,n}}.
\end{equation}

After calculating the expected reward of $N$ contents, according to Problem 1, the edge server chooses $K$ content with the highest expected reward as the cache placement solution. The detail of decentralized cache placement solution is shown as Algorithm 2.}
%%%%%%%%24
\begin{algorithm}[]
\caption{Decentralized Cache Placement Based on Extended MAB of $m$ } %算法的名字
\begin{algorithmic}
\STATE Input the amount of contents $N$, the cache size $K$; the amount of cooperation edge servers $M$; the sub-region $R_{m,p}$, for each content $\bar p_n = 0$; $t = 1$; the batch size $B$;  the primary state slot as $m^*$;
\WHILE{$t\ge1$}
\IF{$m$ is in primary time slot}
\STATE b = 0;
\WHILE{$b<B$}
\IF{$\log_2(t)\in\mathbb{N}$}
\STATE Select combination $\mathbbm{I}_{m,t}$ randomly;
\ELSE
\STATE Receive the cache placement solution from the adjacent edge servers;
\STATE Calculate the estimated reward $\hat r_{m,n}$ for content in $S$ with (20);
\STATE Choose the $K$ content with the highest expected reward as cache placement solution $\mathbbm{I}_{m,t}$;
\ENDIF
\STATE Receive the satisfied user number as reward $r_{m,t}$;
\STATE Calculate $ \bar X_{\mathbbm{I}_{m,t}}$ with (8);
\STATE Calculate $\hat\theta_{m}$ with (10);
\STATE Calculate $\hat p_{\mathbbm{I}_{m,t}}$ with (11);
\STATE Calculate $\hat p_n$ with (19);
\STATE $M_{{\mathbbm{I}_{m,t}}}(t) = M_{{\mathbbm{I}_{m,t}}}(t-1) +1$;
\STATE $b = b+1$;
\ENDWHILE
\STATE Send the cache information to other edge servers;
\ELSE
\STATE Keep the same cache placement solution as the last time slot;
\ENDIF
\ENDWHILE
\end{algorithmic}
\end{algorithm}

The regret analysis of each agent in the cooperative edge server MEC network is given as Proposition 3.

\begin{prop}
Until time $T$, the regret in decentralized Extended MAB based cache placement solution is $ (\frac{\log_2 T}{T} B*K
+2\exp\big(-2({{N-1}\choose{K-1}}\frac{\sigma_1}{\bar D_{1}})^{2\gamma_1} T\big)
+2N\exp \big(-2(\frac{\sigma_2}{D_2}\mu(\theta^*)^{2\gamma_2} T) \big))*gap_{max}$, where $D_1$, $D_2$, $\gamma_1$, $\gamma_2$ and $\sigma_1$, $\sigma_2$ are parameters satisfying $D_1>0$,$D_2>0$, $0<\gamma_1<1$,$0<\gamma_2<1$, $\sigma_1>0$, $\sigma_2>0$.
\end{prop}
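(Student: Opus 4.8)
The plan is to mirror the regret decomposition already used in the proof of Proposition~1, since the decentralized per-agent algorithm has the same skeleton: a batched exploration/exploitation schedule triggered by $\log_2(t)\in\mathbb{N}$, the same batch size $B$, the same combination-level reward update~(8), and the same parameter-estimation formulas~(10)--(11). First I would fix an arbitrary agent $m$ and restrict attention to its primary time slots, since in secondary slots the agent simply repeats its last cache placement and the time-division framework of Section~6 (Fig.~\ref{td}) guarantees that the reward collected by $m$ in its primary slot reflects the \emph{true} satisfied-user count in every sub-region it owns (the overlapped users defer to the agent in the primary slot). This is exactly the observation that removes the miscalculation obstacle and makes the reward seen by agent $m$ statistically identical to the individual-edge-server reward of Proposition~1, with $R_m = \sum_{p=1}^P R_{m,p}$ playing the role of the serving region. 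Hence the number of exploration rounds up to time $T$ is again $\log_2 T$, contributing the term $\frac{\log_2 T}{T}B\,K\,gap_{\max}$ exactly as in Eq.~(14); I would cite Appendix~\ref{ap1} for this bound verbatim.

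Next I would control the exploitation regret. In exploitation rounds the agent receives the cache-placement solutions of its neighbors, forms $s_n$ via~(18), inverts to $\hat p_n$ via~(19), computes the per-content expected reward $\hat r_{m,n}$ via~(21), and — by Proposition~2 — picks its $K$ contents from the best cache placement set $S$. The key point is that once $\hat\theta$ and the $\hat p_n$ are within the concentration radii $\sigma_1,\sigma_2$ of their true values, the greedy selection from $S$ coincides with the globally optimal sub-combination for agent $m$ given the neighbors' (also eventually optimal) choices, so no further regret is incurred in that round. Therefore the exploitation regret is bounded by the probability that the estimates leave those radii, and I would reuse the concentration arguments of Appendix~\ref{ap2}: the global-parameter estimate~(10) aggregates $\binom{N-1}{K-1}$-fold over combinations, giving the Hoeffding-type term $2\exp\!\big(-2(\binom{N-1}{K-1}\frac{\sigma_1}{\bar D_1})^{2\gamma_1}T\big)$ through the Hölder condition on $\mu^{-1}$, and each of the $N$ individual popularities contributes $2\exp\!\big(-2(\frac{\sigma_2}{D_2}\mu(\theta^*))^{2\gamma_2}T\big)$ through the Hölder condition on $\mu$; a union bound over the $N$ contents yields the factor $N$. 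Summing the exploration and exploitation pieces and using $p_n<1$ exactly as after Eq.~(15) gives the stated bound.

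The main obstacle I expect is justifying that the \emph{coupling} between agents does not spoil the single-agent analysis — specifically, that agent $m$'s exploitation choice is optimal \emph{conditionally} on its neighbors' choices, and that those neighbor choices are themselves eventually optimal so the whole configuration reaches the global optimum. This needs two ingredients: (i) the time-division schedule decorrelates the agents' primary-slot observations enough that each agent's estimators concentrate independently of the others' current policies (true because in $m$'s primary slot the overlapped-region reward is attributed solely to $m$, so $m$'s reward stream does not depend on whether neighbors happen to cache the same content); and (ii) Proposition~2, which confines every agent's optimal contents to $S$, so once all estimators are accurate the greedy-from-$S$ rule applied simultaneously by all agents is a best response for each and hence — since the objective in Problem~1 is additive over edge servers and the $\hat r_{m,n}$ in~(21) already discount shared sub-regions by $k_{m,p,n}$ — a global optimum. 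I would state (i)–(ii) as the crux, verify them briefly, and then the regret bound follows by the same two-term split as Proposition~1, with all constants inherited from Appendices~\ref{ap1} and~\ref{ap2}.
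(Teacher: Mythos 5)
Your proposal is correct and follows essentially the same route as the paper: the same split into an exploration term bounded by $\frac{\log_2 T}{T}B\,K\,gap_{max}$ and an exploitation term bounded via a union bound and Hoeffding-type concentration on $\hat\theta$ and the $\hat p_n$, with the constants inherited from the appendices. In fact you justify the reduction to the single-agent analysis (the primary-slot reward attribution and the inter-agent coupling via Proposition~2) more explicitly than the paper does, which simply reuses the Proposition~1 machinery without comment.
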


\begin{proof}

We define the optimal cache placement solution of edge server $m$ is $\mathbbm{I}_{m^*}, \mathbbm{I}_{m^*} = \left\{i_{m,1},...,i_{m,K}\right\}$. The actual cache placement solution of $m$ is defined as $\mathbbm{I}_{m_a}, \mathbbm{I}_{m_a} = \left\{i_{m_a,1},...,i_{m_a,K}\right\}$. The accumulated regret of collaborative cache placement solution until $T$ is written as
\begin{equation}
    Reg(T) =  \frac{1}{T}\sum_{t=1}^{T}\sum_{m = 1}^M r_{\mathbbm{I}_{m^*}} - r_{\mathbbm{I}_{m_a}}.
\end{equation}

The accumulated regret is summary of accumulated regret of random cache placement and accumulated regret of estimated optimal cache placement.

As we mentioned before, the random cache placement solution only happens when $\log_2(t)\in\mathbb{N}$. Given the highest gap $gap_{max}$ of reward between the optimal cache placement solution and an arbitrary cache placement solution, the regret in the random cache can be written as
%%%%%%%%24
\begin{equation}
\begin{split}
     Reg_{1}(T)&\leq\frac{1}{T}\sum_{t = 1}^{T}\sum_{k = 1}^K
     \textbf{1}({\log_2(t)\in\mathbb{N}})B*gap_{max}\\
     &= \frac{\log_2 T}{T}*B*gap_{max}*K.
\end{split}
\end{equation}

%%%%%%%%24
The regret of estimated optimal cache placement is represented as
\begin{equation}
\begin{split}
     Reg_{2}(T)\leq&(2\exp\big(-2({{N-1}\choose{K-1}}\frac{\sigma_1}{\bar D_{1}})^{2\gamma_1} T\big)\\
     +&2N\exp \big(-2(\frac{\sigma_2}{D_2}\mu(\theta^*)^{2\gamma_2} T) \big))*gap_{max}.
\end{split}
\end{equation}

The proof of (23) is shown as appendix 3.

By summing up $Reg_{1}(T)$ and $Reg_{2}(T)$, Proposition 3 is derived.
\end{proof}

%%%%%%%%24
  {The complexity of the centralized cache placement solution  depends on the number of the macro-combinations. At each time slot, the central server updates the parameters of the user density and popularity of each macro-combination, then searches the macro-combination with the highest popularity as the cache placement solution. The complexity is $\mathcal{O}({{N}\choose{K}}^M)$.}

%%%%%%%%24
  {The complexity of the decentralized cache placement solution of edge server $m$ is determined not only by the number of contents, but the sub-regions overlapping with other edge servers. The complexity of updating parameters is the same as the individual scenario. In this case, the complexity is determined by the number of sub-regions $P$ and the maximal edge server number $k_{m,p,n}$ that share the same sub-region. The complexity is written as $\mathcal{O}(P*k_{m,p,n}*N)$.   }

\section{Simulation results}
%%%%%%%%24
%In this section, we present simulation results to assess the performance of the proposed cache placement solution in both individual server scenario and cooperative server scenario.

{To evaluate the performance of our proposed algorithms, we compare them with both intelligent cache placement solutions and common cache placement solutions. We choose the least frequently used (LFU) cache placement solution and least recently used (LRU) cache placement solution as the common solution. LFU solution replaces the content with the shortest requested time. LRU solution considers replacing the least recently used content in the cache space.  We use the UCB bandit based solution, which is proposed in \cite{Blasco_2014}, and $\epsilon$-greedy bandit as the intelligent baselines. The $\epsilon$-greedy bandit lets the agent choose the arms randomly with probability $1-\epsilon$ and choose the best arm with probability $\epsilon$.  Specifically, we set $\epsilon = 0.95$ in the $\epsilon$-greedy bandit. The UCB bandit based solution focuses not only on the reward but also on the exploration duration of each arm. The UCB bandit based solution adjusts the trade-off between exploration and exploitation according to the accumulated reward and the time required for playing different arms. Furthermore,  we refer to the collaborative cache placement method in \cite{Xu_2020} as a baseline to evaluate the performance of the cooperative server scenario. The collaborative cache placement method also discusses how the overlapped region influences the performance of cache placement and proposes a multi-agent cache placement solution. Different from the solution in this paper, \cite{Xu_2020} assumes that the user number and the location of each user are pre-defined.

We choose three different metrics to compare the performance of the proposed cache placement solution, which are the accumulated regret,  convergence time, and average satisfying user number. For the first and second metrics, we mainly discuss the performance of the proposed Extended MAB. With the third metric, we show the effects of using different cache placement solutions with different settings.}
%%%%%%%%24
\subsection{Performance in individual server scenario}

{To evaluate the performance of the Extended MAB cache placement in the individual edge server scenario, we first compare the performance with different content numbers. We choose $N = 5$ and $N = 10$ and that the cache capacity of the edge server is $2$. The batch $B$ is set as $20$.
We define the radius of the serving region of each edge server as $5$ and the parameter of user density as $5$.  The popularity of each content follows the Zipf distribution. 
%%%%%%%%24

\begin{figure}[h]
\centering
\includegraphics[width=0.5\textwidth]{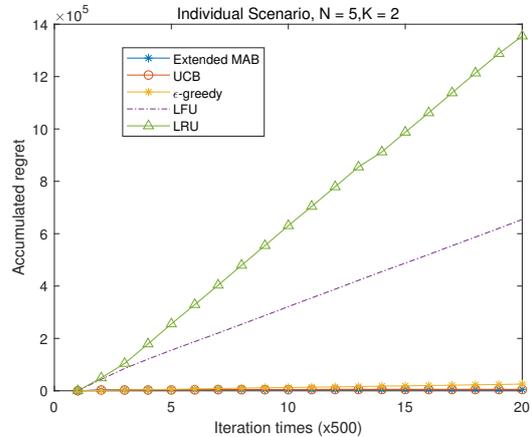}
\caption{The accumulated regret in the individual edge scenario with $N = 5, k = 2$.}
\label{singleagentM5res}
\end{figure}

\begin{figure}[h]
\centering
\includegraphics[width=0.5\textwidth]{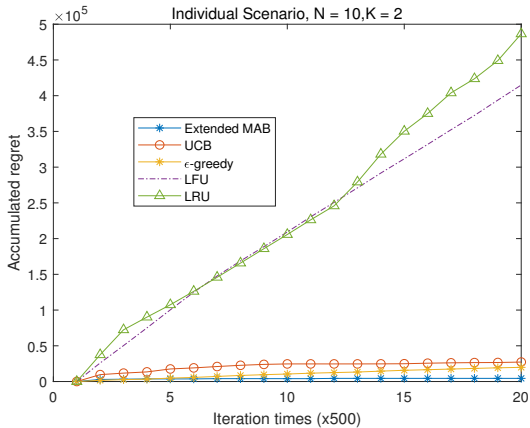}
\caption{The accumulated regret in the individual edge scenario with $N = 10, k = 2$.}
\label{singleagentM10res}
\end{figure}
%%%%%%%%24
Fig.~\ref{singleagentM5res} and Fig.~\ref{singleagentM10res} presents the accumulated regret of different cache placement solution with $N = 5$ and $N = 10$. The accumulated regret reflects the performance gap between the practical cache placement policy and the optimal cache placement policy. According to  Fig.~\ref{singleagentM5res} and Fig.~\ref{singleagentM10res}, we could see that the Extended MAB cache placement solution could achieve the optimal cache placement solution when it learns the environment well. According to Fig.~\ref{singleagentM5res} and Fig.~\ref{singleagentM10res}, the Extended MAB based solution has minimal accumulated regret. The LRU method shows an unstable performance as the LRU policy does not depend on the accumulated experience.

\begin{figure}[h]
\centering
\includegraphics[width=0.5\textwidth]{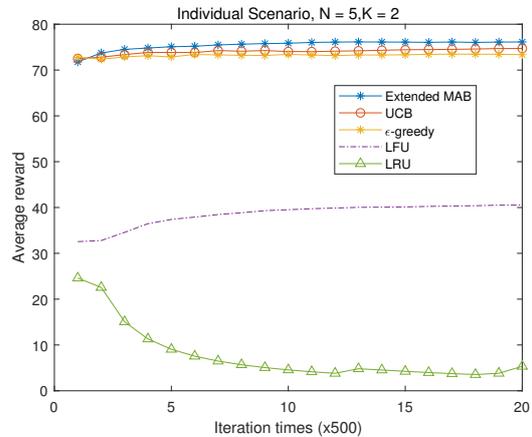}
\caption{The average satisfied user number in the individual edge scenario with $N = 5, k = 2$.}
\label{singleagentM5rewardres}
\end{figure}

\begin{figure}[h]
\centering
\includegraphics[width=0.5\textwidth]{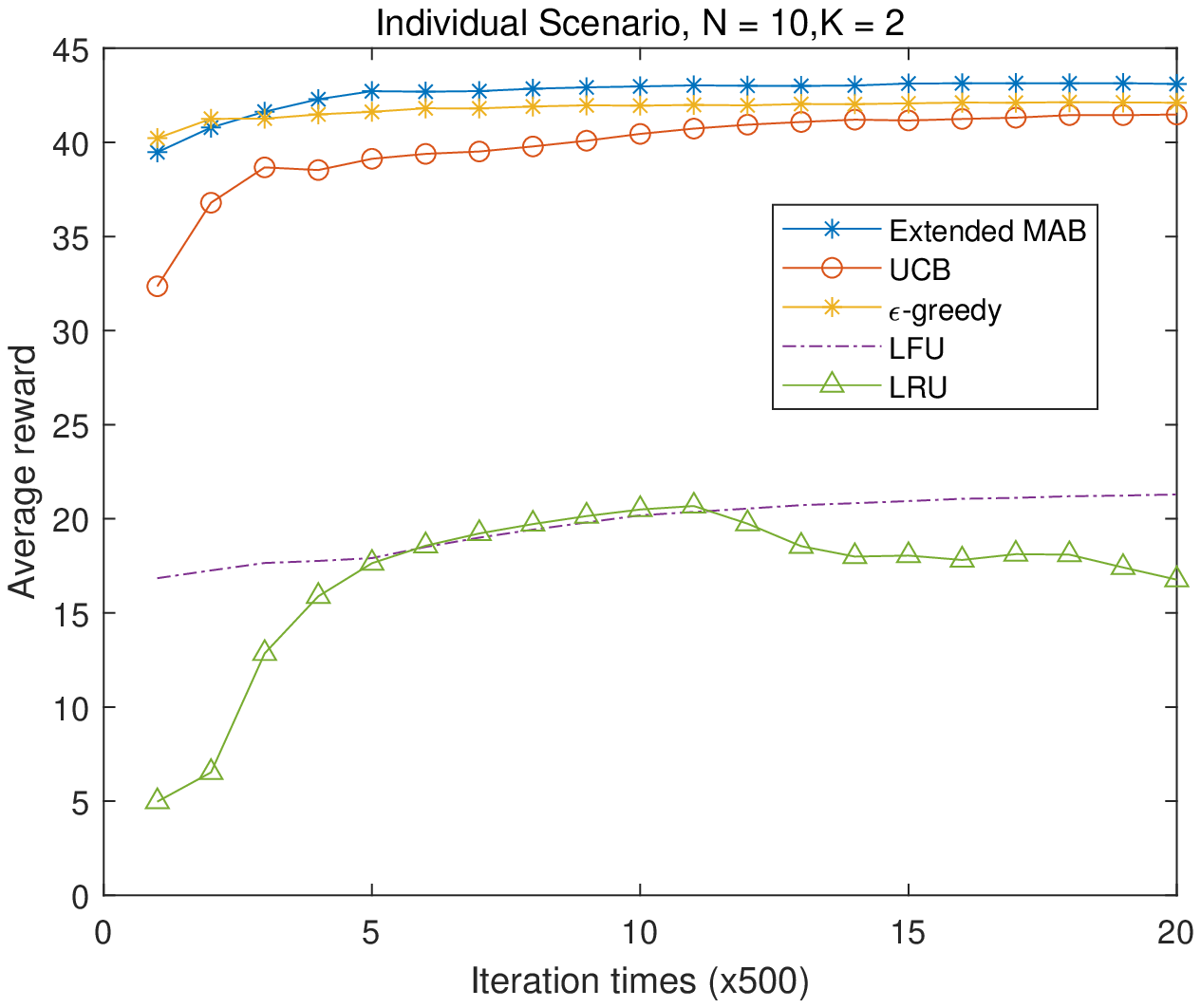}
\caption{The average satisfied user number in the scenario with $N = 10, k = 2$.}
\label{singleagentM10rewardres}
\end{figure}

Fig.~\ref{singleagentM5rewardres} and Fig.~\ref{singleagentM10rewardres} show the average satisfied user number by the cache placement with different solutions. The proposed algorithm has a better average reward and a faster convergence behavior than those of the  others. According to  Fig.~\ref{singleagentM5rewardres} and Fig.~\ref{singleagentM10rewardres}, we could see that with the experience knowledge, the learning based algorithms show better performance than the general cache solution. With the accurate estimation of the content popularity, the edge server could choose the optimal cache placement solution.

Moreover, we compare the accuracy of the parameter estimation among the proposed algorithm and the baselines. Table~\ref{convergencetime} illustrates the user density estimation accuracy under different numbers of iterations.} Table~\ref{convergencetime} presents the accuracy of estimating the parameter in different iteration times. It is observed that the Extended MAB solution can estimate the parameters more accurately and faster, which leads to a more accurate search of the optimal cache placement solution.%%%%%%%%24
%\begin{center}
\begin{table} [h]
\caption{User density parameter estimation accuracy of different algorithms}
\centering
\begin{tabularx}{8.5cm}{llll}
\hline
Number of iterations & 4000  & 8000 & 12000 \\%&  16000 \\%& 20000\\
\hline
$\epsilon$-greedy  & 0.4889 & 0.1247 & 0.1245 \\%& 0.1265 \\%& 0.1258\\
UCB & 0.4694 & 0.0975 & 0.0895 \\%& 0.0854 \\%& 0.0852 \\
Proposed algorithm & 0.1817 & 0.0027 & 0.0053 \\%& 0.0051 \\%&  0.0035\\
\hline
\end{tabularx}
\label{convergencetime}
\end{table}
%\end{center}

\subsection{Performance in cooperative server scenario}

{{The centralized cache placement solution could be regarded as the large-scale individual cache placement solution, which has the similar performance as the last subsection. In this section, we only evaluate the decentralized cache placement solution.  It is noted that we introduce a new baseline from \cite{Xu_2020} to verify the availability of the decentralized Extended MAB based cache placement solution. In the following presentation, we call the baseline as Collab MAB. In the cooperative edge server scenario, we assume that the number of overlapping edge servers is 2 and 3 in the following experiments.  We set the content number as $N = 10$ and $N = 20$ and the cache size as 3 and 5 for each edge server. The overlap size between two edge servers is the half size of the serving region. We first give the accumulated regret and the average reward when $M = 2$.
The accumulated regret when $N = 10$ and $N = 20$ are shown as Fig.~\ref{multiagentC10} and Fig.~\ref{multiagentC20}.
%%%%%%%%24
\begin{figure}[h]
\centering
\includegraphics[width=0.5\textwidth]{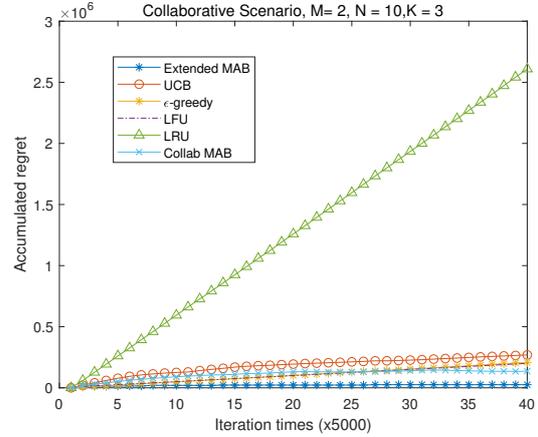}
\caption{The accumulated regret in the collaborative edge scenario with $M = 2, N = 10, k = 3$.}
\label{multiagentC10}
\end{figure}

\begin{figure}[h]
\centering
\includegraphics[width=0.5\textwidth]{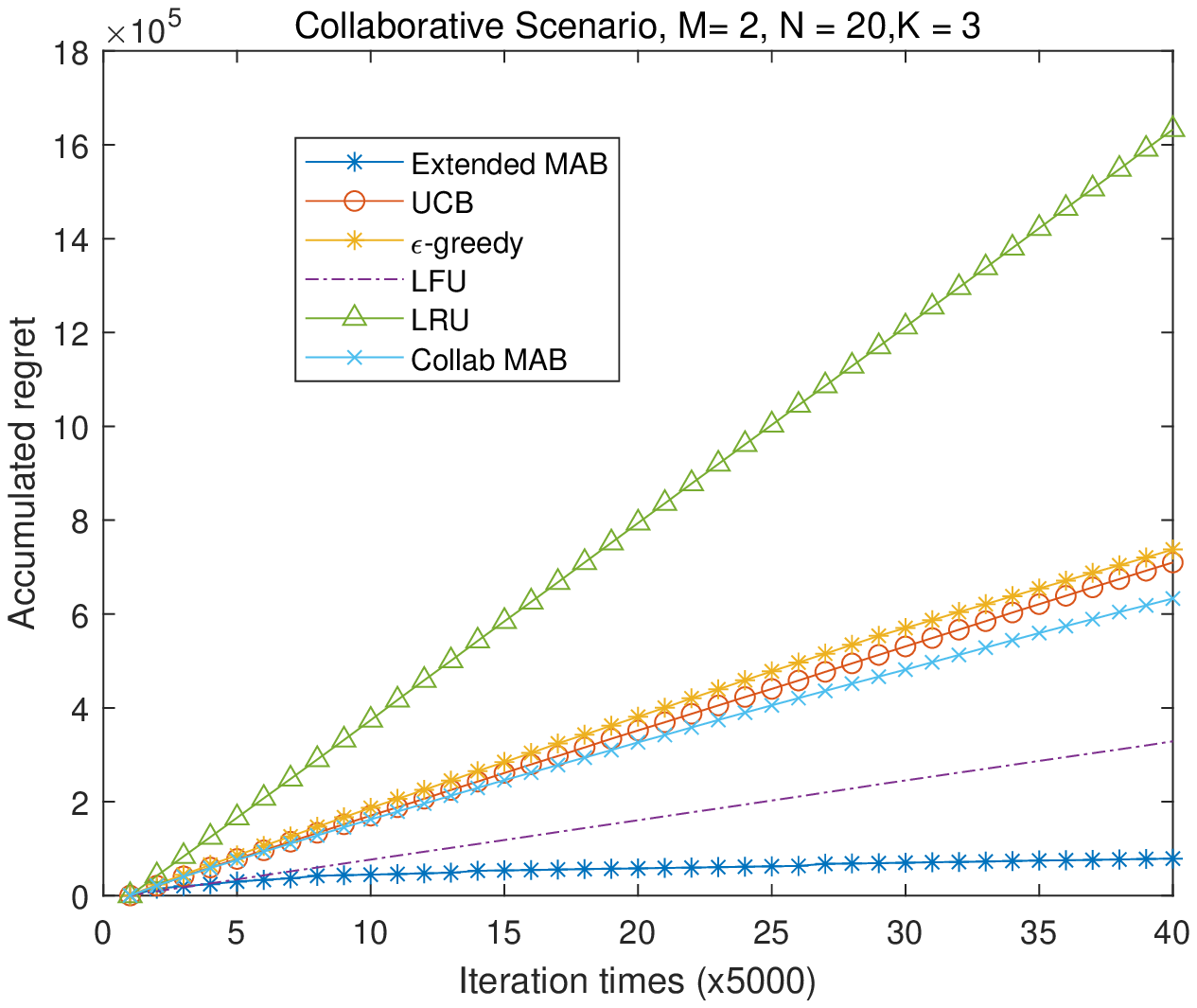}
\caption{The accumulated regret in the collaborative edge scenario with $M = 2, N = 20, k = 3$.}
\label{multiagentC20}
\end{figure}

According to Fig.~\ref{multiagentC10} and Fig.~\ref{multiagentC20}, the proposed Extended MAB solution has the minimum accumulated reward. Moreover, the UCB solution, $\epsilon$-greedy solution, and the Collab MAB solution have a better performance when the number of content is small. With the increase of the content number, these three baselines do not share the global parameter, i.e., the user density, which hinders the cache placement learn the environment and derive a better policy. Meantime, because of the influence of the overlapped region, these three baselines do not estimate the actual reward accurately, which further increases the accumulated regret.

\begin{figure}[h]
\centering
\includegraphics[width=0.5\textwidth]{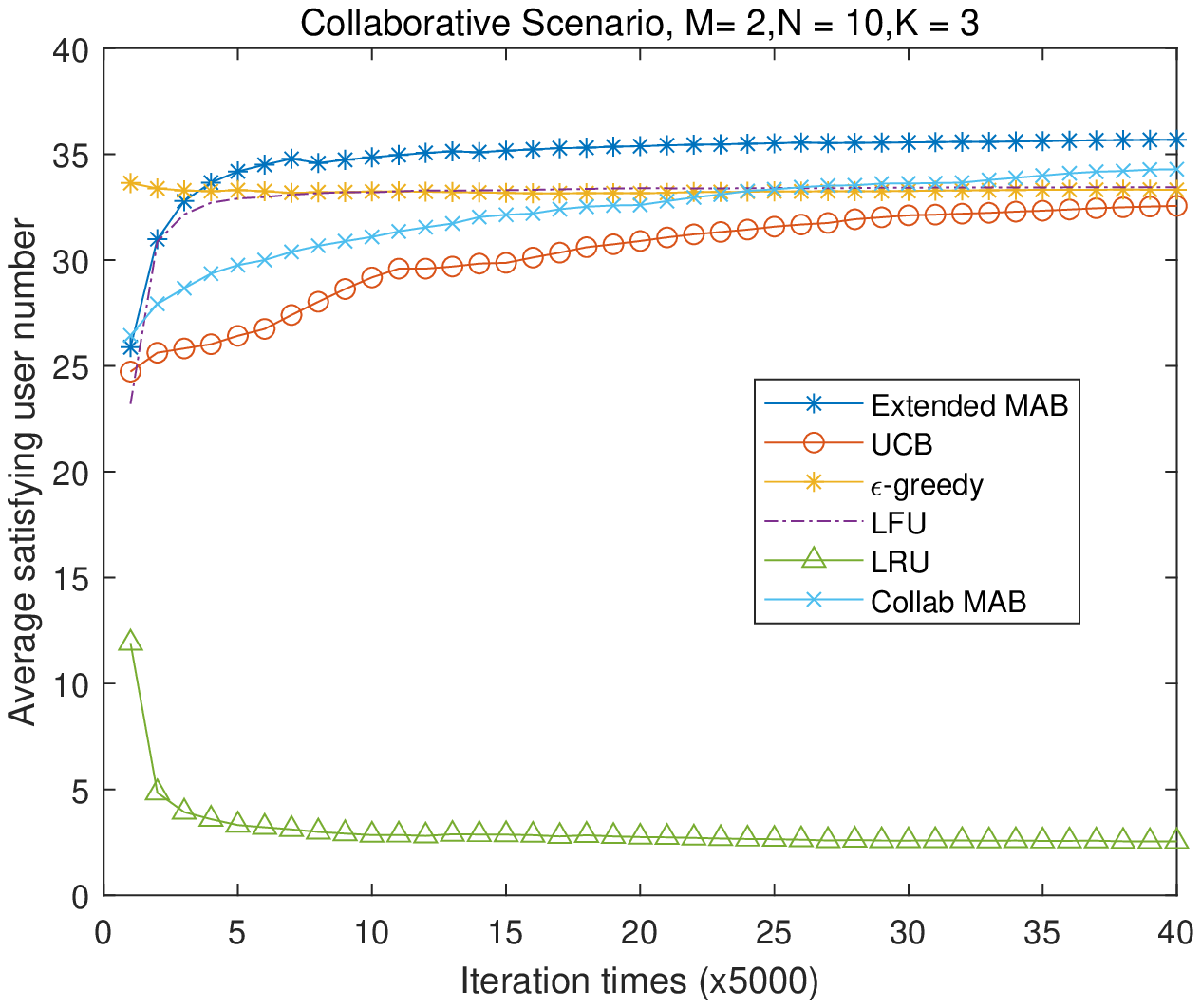}
\caption{The average satisfied user number in the collaborative edge scenario with $M = 2, N = 10, k = 3$.}
\label{multiagentC10rev}
\end{figure}

\begin{figure}[h]
\centering
\includegraphics[width=0.5\textwidth]{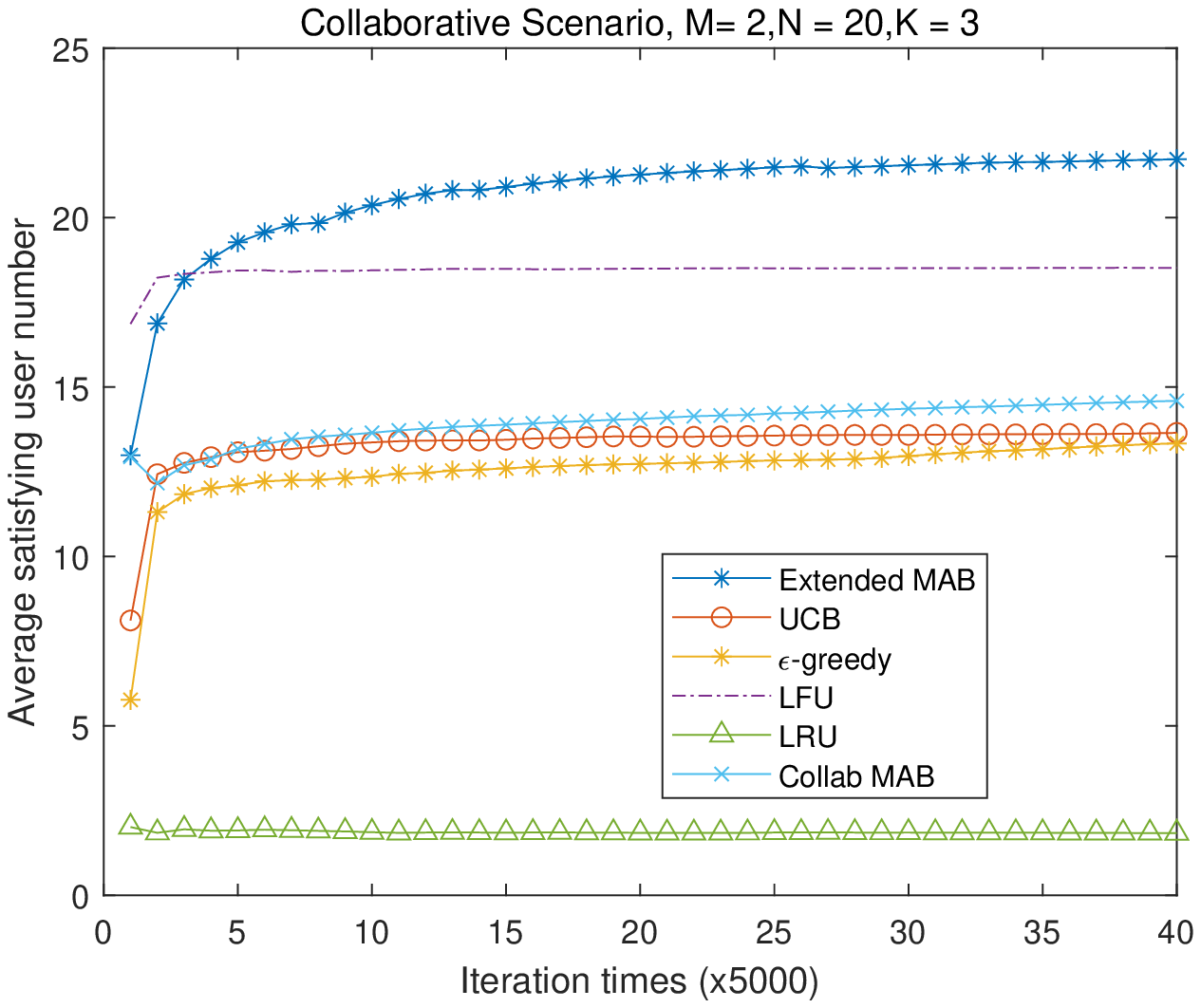}
\caption{The average satisfied user number in the collaborative edge scenario with $M = 2, N = 20, k = 3$.}
\label{multiagentC20rev}
\end{figure}

The average number of satisfied user are shown as Fig.~\ref{multiagentC10rev} and Fig.~\ref{multiagentC20rev}.
According to Fig.~\ref{multiagentC10rev} and Fig.~\ref{multiagentC20rev}, the average satisfied user number improves when the user number increases. With the same Zipf distribution parameter, if the content number increases, the popularity of each content tends to be evenly distributed. In this case, the proposed Extend MAB cache placement solution could adjust the policy based on the overlapped region, which leads to better performance.%%%%%%%%%%%24

Since the popularity of each content influences the optimal cache placement solution, we also adjust the Zipf parameter to evaluate the performance with different popularity of contents. The Zipf parameter is chosen as $\left\{0,0.5,1,1.5\right\}$ to further evaluate the performance of the average satisfied user number. According to Fig.~\ref{zipfinf}, the propose Extended MAB always shows the best performance with the change of Zipf parameter.

\begin{figure}[h]
\centering
\includegraphics[width=0.5\textwidth]{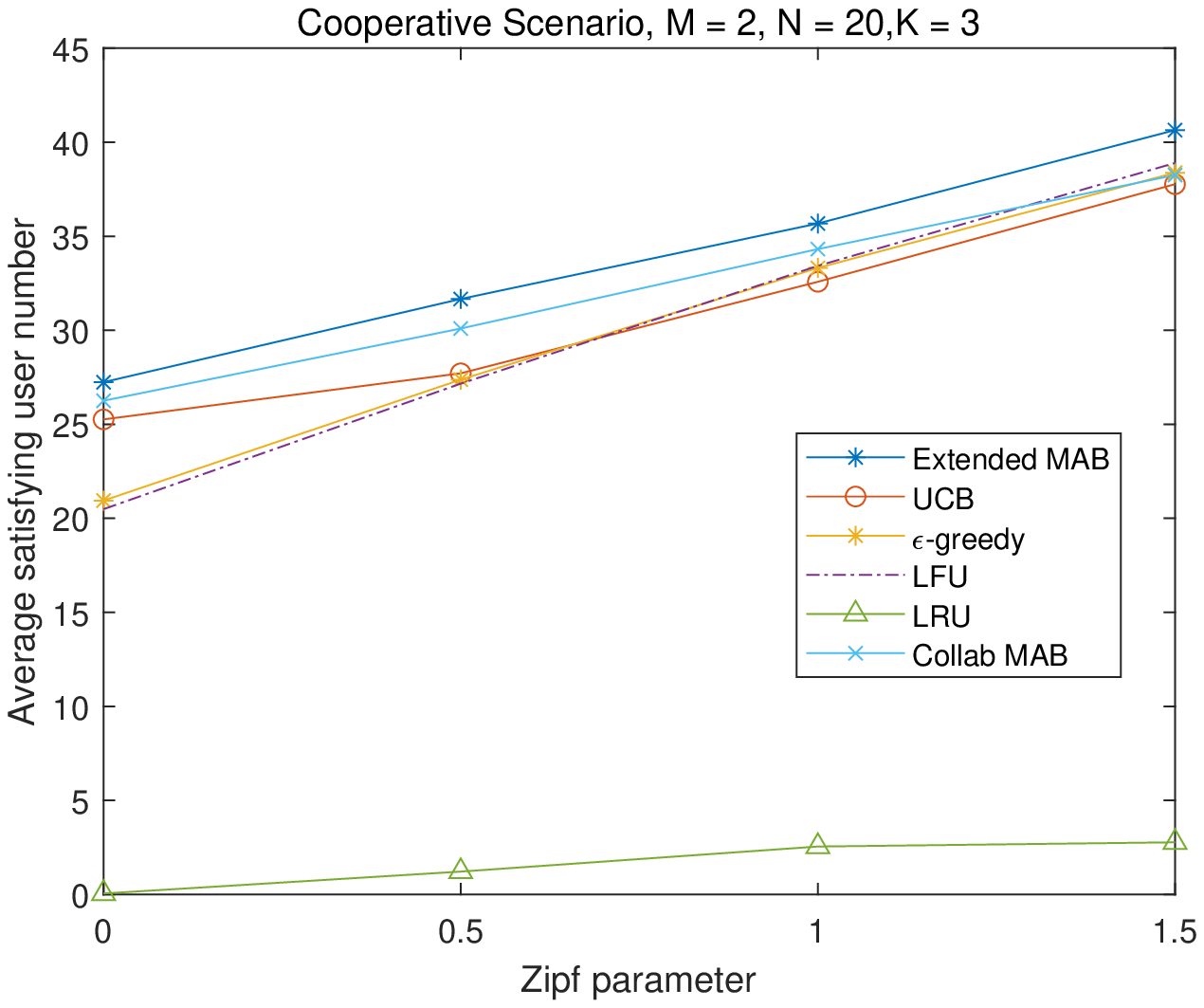}
\caption{The average satisfied user number with different zipf parameter.}
\label{zipfinf}
\end{figure}

Then, we set the number of edge servers as $3$. The radius of the serving region is set as $3$. Any two of them share a overlapped region with area size $2.2$ and all of them share a overlapped region with the area size $0.8$. We consider $20$ contents in the environment. The Zipf parameter is defined as $0.5$. The accumulated regret and average reward are shown as Fig.~\ref{multiagentC20reg} and Fig.~\ref{multiagentC20rev3u}.

\begin{figure}[h]
\centering
\includegraphics[width=0.5\textwidth]{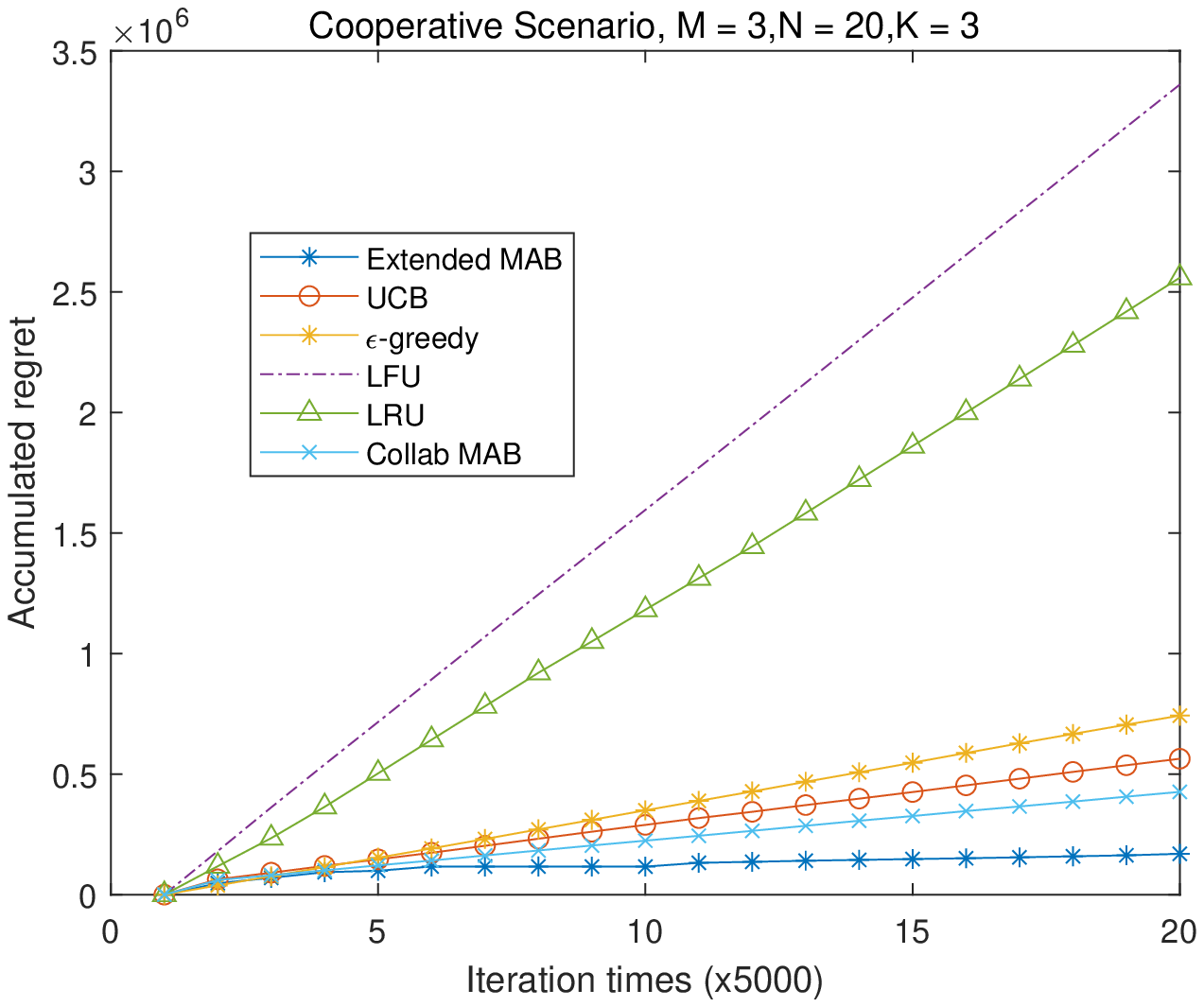}
\caption{The accumulated regret in the collaborative edge scenario with $M = 3, N = 10, k = 3$.}
\label{multiagentC20reg}
\end{figure}

\begin{figure}[h]
\centering
\includegraphics[width=0.5\textwidth]{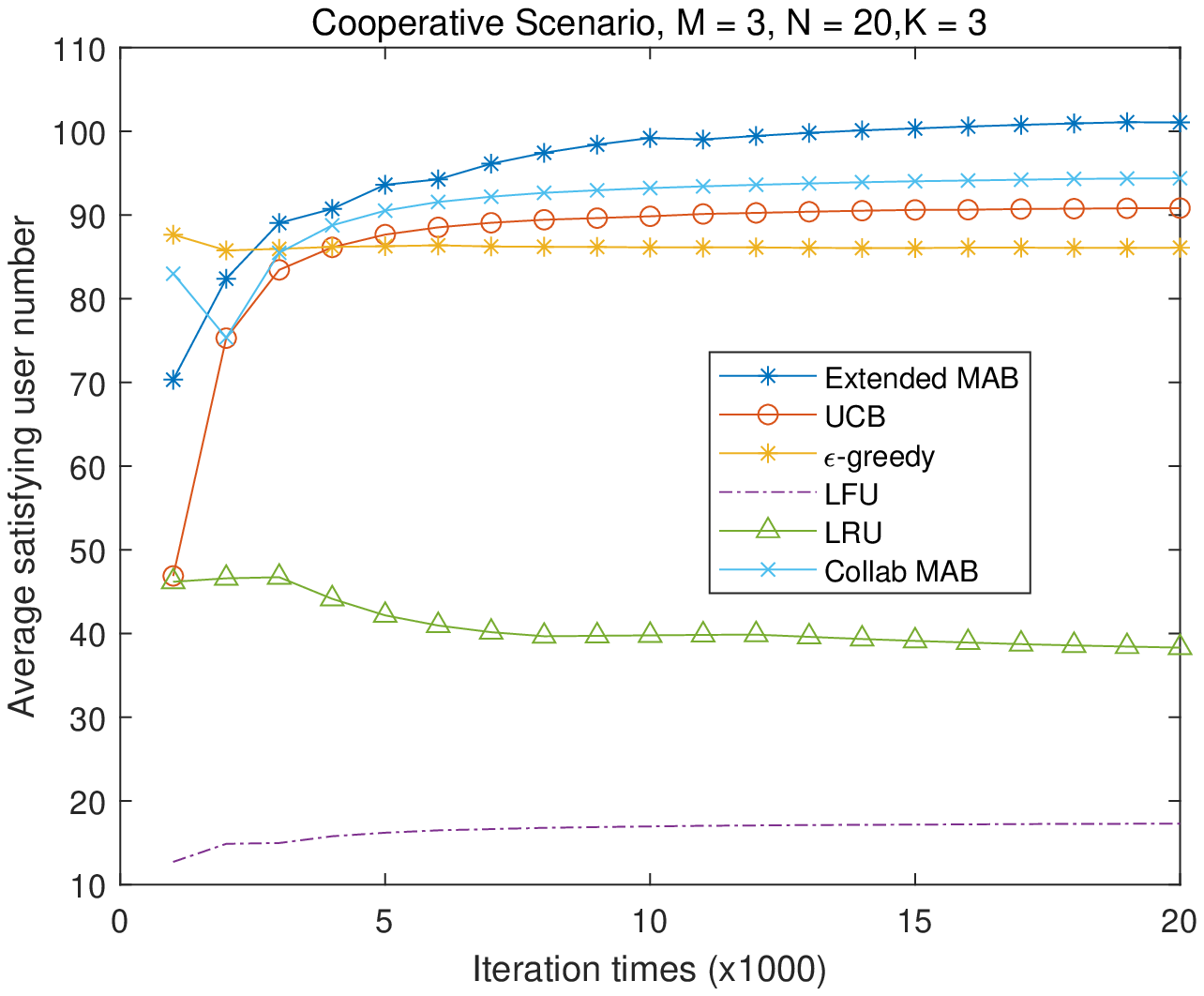}
\caption{The average satisfied user number in the collaborative edge scenario with $M = 3, N = 20, k = 3$.}
\label{multiagentC20rev3u}
\end{figure}

According to the Fig.~\ref{multiagentC20reg}, the accumulated regret has the same tendency as the scenario of $M = 2$. If the edge servers have overlapped region, the proposed cache placement could show better performance than all the baselines. The average satisfied user number in Fig.~\ref{multiagentC20rev} further verifies the performance improvement. The Extended MAB based cache placement solution has the highest average satisfied user number when the overlapped region is considered.

\begin{figure}[h]
\centering
\includegraphics[width=0.5\textwidth]{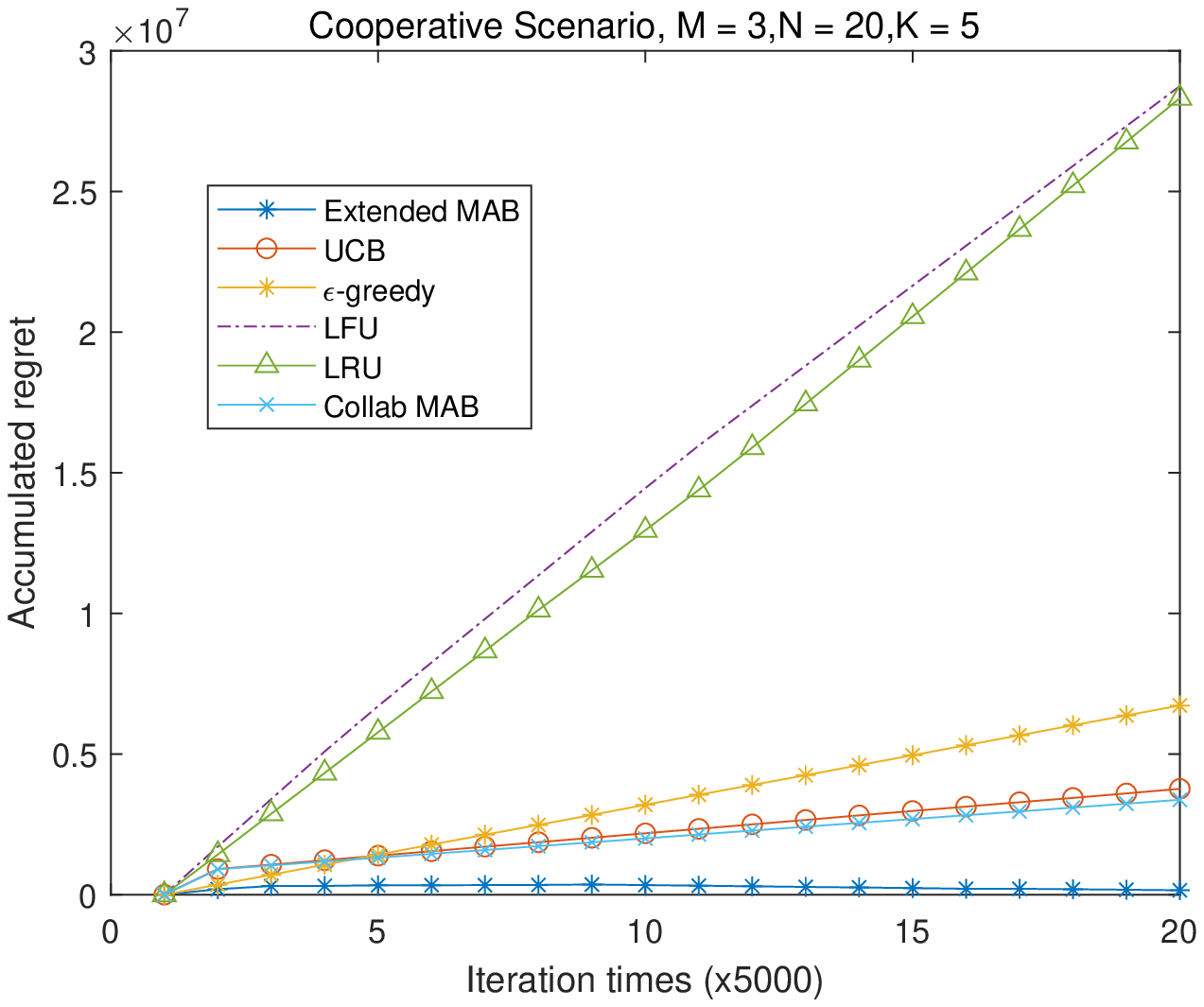}
\caption{The accumulated regret in the collaborative edge scenario with $M = 3, N = 20, k = 5$.}
\label{multiagentC205reg}
\end{figure}

\begin{figure}[h]
\centering
\includegraphics[width=0.5\textwidth]{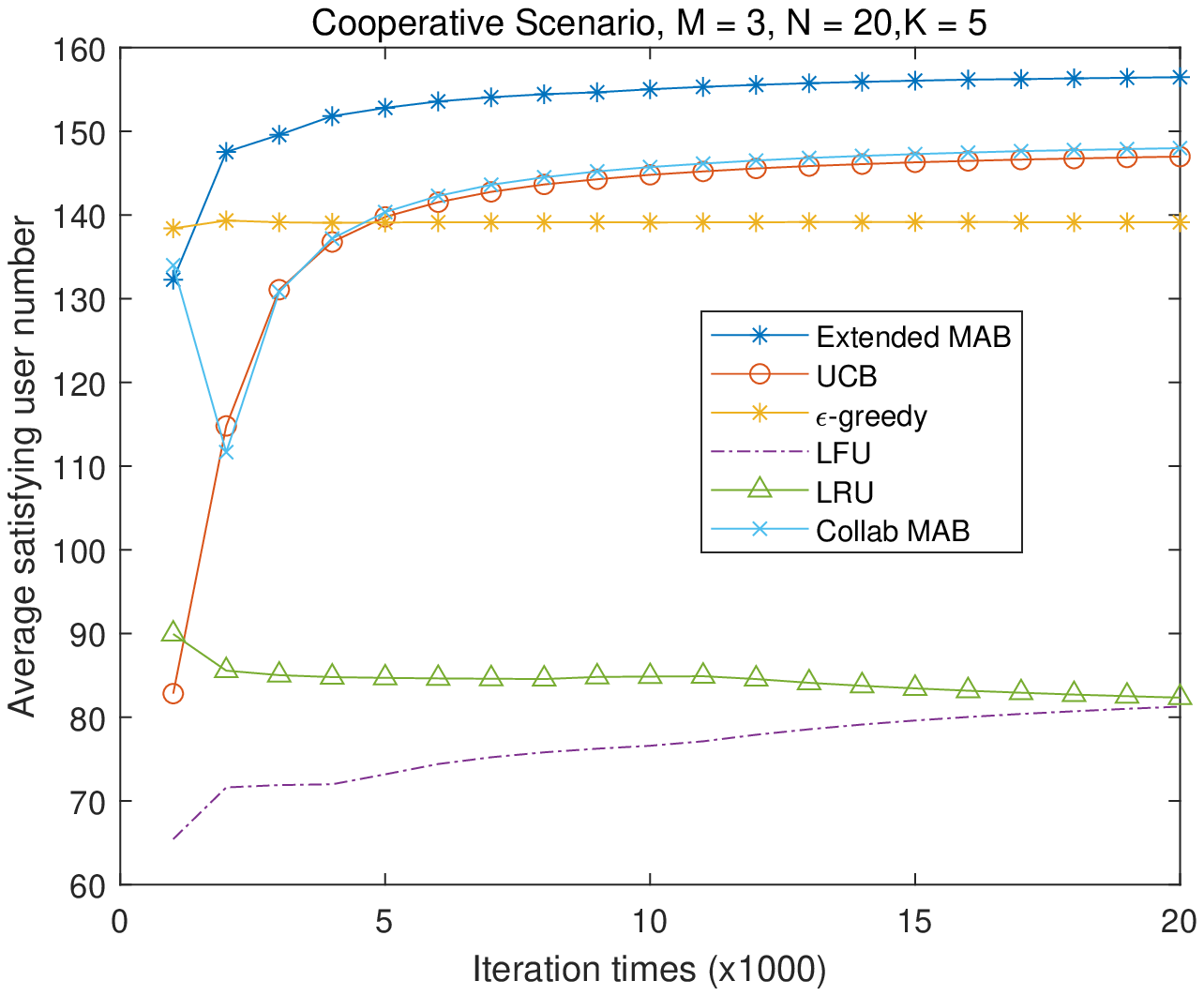}
\caption{The average satisfied user number in the collaborative edge scenario with $M = 3, N = 20, k = 5$.}
\label{multiagentC205rev}
\end{figure}

Fig.~\ref{multiagentC205reg} and Fig.~\ref{multiagentC205rev} indicate the accumulated regret and average reward when we set the content number as $20$ and cache size as $5$. Even though we increase the number of content and  further increase the number of cache combinations for each edge server, we still could get the same conclusion that the proposed Extended MAB based solution performs better if the overlapped region is considered in the cache placement solution.

 }}

\section{Conclusions}
In this paper, we study cache placement in the practical MEC network with unknown user behaviors. We develop an extended MAB based on the standard MAB and GGB to formulate the problem. To solve the problem, we first give a cache placement strategy for the individual edge server scenario, where edge servers share a non-overlapping serving region. After that, the joint cache placement is proposed for the cooperative edge server scenario with overlapping serving regions. To reduce the complexity of the proposed centralized algorithm, a decentralized extended MAB is proposed. The simulation results show that the proposed algorithm has the best performance compared to the baseline algorithms in different situations.

\begin{appendices}
      \section{Proof of Equation (14)}\label{ap1}

When $\log_2(t)\in\mathbb{N}$ is satisfied, the algorithm chooses the cache placement randomly. It is noted that as the reward from a certain action is limited, the regret between the best cache placement and other cache placement is limited. we denote the upper bound of the regret is denoted as $gap_{max}$. Hence, after a batch of random action, the regret of the exploration action is
%%%%%%%%%%%24
%%%%%%%%%%%24
\begin{equation}
\begin{split}
     Reg(T)&=\frac{1}{T}\sum_{t = 1}^{T} \textbf{1}({\log_2(t)\in\mathbb{N}})B*  (a_k\mu(\theta) - X_{I_{k,t}})\\
            &\leq \frac{\log_2 T}{T}* B*gap_{max}*K
            \end{split}
\end{equation}
where $\textbf{1}({\log_2(t)\in\mathbb{N}}) = 1$ if ${\log_2(t)\in\mathbb{N}}$ is satisfied, otherwise, $\textbf{1}({\log_2(t)\in\mathbb{N}}) = 0$.

      \section{Proof of Equation 15 }
      \label{ap2}
      It is noted that if the unknown parameters are estimated correctly, the practical cache placement is the same as the best cache placement. The regret between the optimal cache placement and the practical cache placement is caused by the wrong estimation of the parameters. According to the analysis, we could derive that
      %%%%%%%%%%%24
\begin{equation}
\left\{I_{max}\neq I_{chosen} \right\}
\subset \left\{ \hat\theta\neq\theta\right\}\cup\left\{\left\{ \exists n, \hat p_n \neq p_n \right\}\cap\left\{ \hat\theta=\theta\right\} \right\}.
\end{equation}

Based on (25), we separately analyze the regret with different wrong parameters. We firstly discuss the probability of $\hat\theta\neq\theta$. If $\hat\theta\neq\theta$, there exists $\sigma_1\geq 0$ which conforms $|\hat\theta - \theta|\ge\sigma_1$. We  establish the relationship between the estimated parameter of the user density and the expected reward of each combination until $t$, that is, there exists $ D_1>0$ and $ 1>\gamma_1>0$ satisfying
\begin{equation}
\begin{split}
&Pr\left\{\hat\theta\neq\theta \right\} = Pr({|\hat\theta - \theta|\ge\sigma_1})\\
%\leq&Pr(|\mu^{-1}(\hat\theta) - \mu^{-1}({\theta})|\ge\sigma_1)\\
\leq& Pr( D_1|\frac{1}{{{N-1}\choose{K-1}}}(\bar X_1 + ... + \bar X_{n}) - \mu(\theta)|^{\gamma_1} \ge \sigma_1).
\end{split}
\end{equation}
We define $\bar X$ as a variable which satisfies $\bar X=\frac{1}{{{N-1}\choose{K-1}}} (\bar X_1 + ... + \bar X_{n})$. It could be easily concluded that the expectation of $\bar X$ is $\mu(\theta)$. Eq. (26) could be rewritten as
%%%%%%%%%%%24
\begin{equation}
\begin{split}
&Pr( D_1|\frac{1}{{{N-1}\choose{K-1}}}(\bar X_1 + ... + \bar X_{n}) - \mu(\theta)|^{\gamma_1} \ge \sigma_1)\\
&=Pr(|\bar X - {{N-1}\choose{K-1}}\mu(\theta)|^{\gamma_1}\ge \big({{N-1}\choose{K-1}}\frac{\sigma_1}{ D_{1}}\big))\\
\overset{(1)}{\leq}&2\exp\big(-2({{N-1}\choose{K-1}}\frac{\sigma_1}{ D_{1}})^{2\gamma_1} T\big).
\end{split}
\end{equation}
where (1) succeeds due to  Hoeffding's inequality.

Next we determine the probability $Pr\left\{\exists n, \hat p_n \neq p_n|\hat\theta=\theta\right\}$ following the similar analysis. There exists $\sigma_2\geq 0$ satisfying
\begin{equation}
Pr\left\{\exists n, \hat p_n \neq p_n|\hat\theta=\theta\right\}\\
\leq\sum_{n=1}^{N}Pr\big( |\hat p_n -p_n|\ge\sigma_2|\hat\theta=\theta\big).
\end{equation}
Since we already discussed the situation of $\hat\theta=\theta$, we could remove the condition directly. Based on Assumption 1, there exists $ D_2>0$ and  $1>\gamma_2>0$ which satisfies
%%%%%%%%%%%24
\begin{equation}
Pr\left\{\hat p_n \neq p_n|\hat\theta=\theta\right\}
\leq 2\exp \big(-2(\frac{\sigma_2}{D_2}p_n\mu(\theta)^{2\gamma_2} T) \big).
\end{equation}

\section{Proof of Equation (23)}\label{ap3}
When the edge server $m$ chooses the expected optimal cache placement solution, the regret  is
%%%%%%24
\begin{equation}
\begin{split}
Reg(T) &= gap_{max}*Pr(I_{max}\neq I_{chosen}) \\
\leq& gap_{max}*Pr(p_1\neq\hat p_1\cup...\cup p_N\neq\hat p_N\cup \theta\neq\hat \theta)\\
 \leq& gap_{max}*(Pr( \theta\neq\hat \theta)+\sum_{n=1}^N Pr(p_n\neq\hat p_n| \theta\neq\hat \theta)).
\end{split}
\end{equation}

According to Eq. (27) and Eq. (29), we have the following inequality regarding the expected regret of optimal cache placement policy
%%%%%%%%24
\begin{equation}
\begin{split}
&Reg(T) \leq (2\exp\big(-2({{N-1}\choose{K-1}}\frac{\sigma_1}{ D_{1}})^{2\gamma_1} T\big)\\
+&2N\exp \big(-2(\frac{\sigma_2}{D_2}\mu(\theta^*)^{2\gamma_2} T) \big))*gap_{max}.
\end{split}
\end{equation}

We obtain the expected regret of each edge server in cooperative scenario given by
\begin{equation}
\begin{split}
Reg(T) \leq (\frac{\log_2 T}{T}*K
+2\exp\big(-2({{N-1}\choose{K-1}}\frac{\sigma_1}{ D_{1}})^{2\gamma_1} T\big) \\
+2N\exp \big(-2(\frac{\sigma_2}{D_2}\mu(\theta^*)^{2\gamma_2} T) \big))*gap_{max}.
\end{split}
\end{equation}
\end{appendices}

\end{document}